\newtheorem{theorem}{Theorem}
\newtheorem{proposition}{Proposition}
\newtheorem{corollary}{Corollary}
\newtheorem{observation}{Observation}
\newtheorem{remark}{Remark}
\newtheorem{definition}{Definition}
\newcommand{\citet}[1]{\citeauthor{#1}~\shortcite{#1}}
\newcommand{\citep}{\cite}
\newcommand{\cI}{\ensuremath{\mathcal{I}}}
\newcommand{\leafnodes}{\ensuremath{\mathcal{Z}}}
\begin{document}
%
\title{Quasi-Perfect Stackelberg Equilibrium}
\author{
	Alberto Marchesi,\textsuperscript{\textnormal 1} Gabriele Farina,\textsuperscript{\textnormal 2} Christian Kroer,\textsuperscript{\textnormal 2} Nicola Gatti,\textsuperscript{\textnormal 1} Tuomas Sandholm\textsuperscript{\textnormal 2} \\
		\textsuperscript{1} Politecnico di Milano, Piazza Leonardo da Vinci 32, I-20133, Milan, Italy \\
	\textsuperscript{2} Carnegie Mellon University, 5000 Forbes Avenue, Pittsburgh, PA 15213, USA \\
	alberto.marchesi@polimi.it, gfarina@cs.cmu.edu, ckroer@cs.cmu.edu, \\
	nicola.gatti@polimi.it, sandholm@cs.cmu.edu 
}	

\maketitle

\begin{abstract}
  Equilibrium refinements are important in extensive-form (\emph{i.e.}, tree-form) games, where they
  amend weaknesses of the Nash equilibrium concept by requiring sequential rationality and other beneficial properties.
  One of the most attractive refinement concepts is \emph{quasi-perfect equilibrium}. While quasi-perfection has been studied in extensive-form games, it is poorly
  understood in Stackelberg settings---that is, settings where a leader can commit to a strategy---which are important for
  modeling, for example, security games. In this paper, we introduce the axiomatic
  definition of \emph{quasi-perfect Stackelberg equilibrium}. We develop a broad class of
  game perturbation schemes that lead to them in the limit. Our
  class of perturbation schemes strictly generalizes prior perturbation
  schemes introduced for the computation of (non-Stackelberg) quasi-perfect
  equilibria. Based on our perturbation schemes, we develop a branch-and-bound
  algorithm for computing a quasi-perfect Stackelberg equilibrium. It leverages a perturbed variant of the linear program for computing a
  Stackelberg extensive-form correlated equilibrium. Experiments show that our algorithm can be used to find an approximate quasi-perfect Stackelberg equilibrium in games with thousands of nodes.
\end{abstract}

\section{Introduction}\label{ec:introduction}

The main solution concept in game theory, \emph{Nash equilibrium (NE)}, may prescribe non-credible
strategies in \emph{extensive-form (i.e., tree-form) games (EFGs)}. To solve that problem, equilibrium refinements have been proposed for such games~\cite{selten1975reexamination}.  Among the
plethora of NE refinements
(see~\citeauthor{vanDamme:1987:SPN:38403}~\shortcite{vanDamme:1987:SPN:38403}
for details), the \emph{quasi-perfect equilibrium (QPE)}, proposed
by~\citeauthor{van1984relation}~\shortcite{van1984relation}, plays a central
role, and it is considered one of the most attractive NE refinement concepts, as
argued, for example,
by~\citeauthor{RePEc:eee:gamebe:v:8:y:1995:i:2:p:378-388}~\shortcite{RePEc:eee:gamebe:v:8:y:1995:i:2:p:378-388}.
The rationale behind the QPE concept is that every player, in every information
set, plays her best response to perturbed---that is, subject to
trembles---strategies of the opponents. Unlike the \emph{normal-form perfect
equilibrium}, the QPE guarantees that the strategies of the players are
sequentially rational, and furthermore, quasi-perfection implies normal-form
perfection. Unlike the \emph{extensive-form perfect equilibrium (EFPE)}, in a QPE
every player (reasonably) assumes that she will not make mistakes in the
future, and this excludes some unreasonable
strategies~\cite{RePEc:eee:gamebe:v:8:y:1995:i:2:p:378-388}. Computation of NE
refinements has received extensive attention in the literature. In the
two-player case,
\citeauthor{miltersen2010computing}~\shortcite{miltersen2010computing} provide
algorithms for finding a QPE, while~\citeauthor{DBLP:conf/aaai/Farina017}~\shortcite{DBLP:conf/aaai/Farina017}
for finding an EFPE. In particular,
\citeauthor{miltersen2010computing}~\shortcite{miltersen2010computing} show that
a strict subset of the QPEs can be found when the sequence form is subject to a
specific perturbation,
while~\citeauthor{DBLP:conf/aaai/Farina017}~\shortcite{DBLP:conf/aaai/Farina017}
do the same for the EFPE. Iterative algorithms for such perturbed games in the
zero-sum EFPE setting were introduced by \citet{kroer2017smoothing} and
\citet{farina2017regret}.\footnote{\emph{Normal-form proper equilibrium} is a refinement of
QPE~\citep{van1984relation}, but it has drawbacks:
(1) it requires players to assume a very specific structure on trembles
which is not necessarily well-motivated, (2) the minimum tremble magnitudes depend on the action probabilities, which begets additional computational challenges, and (3) it is unknown whether it can be represented via
perturbation schemes, even in the non-Stackelberg setting.
For the zero-sum
case, \citet{miltersen2008fast} show a polynomial-time approach using the
sequence form, but it is based on solving a large (possibly linear in game-size)
number of LPs, and thus may not be practical.
For the general-sum case, it is
not even known whether the sequence form can be applied; the only known approach
relies on conversion to normal form---which causes an exponential blow-up---and then applying a pivoting
algorithm~\citep{Sorensen12:Computing}.}

In \emph{Stackelberg games}, a \emph{leader} commits to a (possibly mixed) strategy first, and then a \emph{follower}
best responds to that strategy~\citep{Stackelberg34:Marktform}. Stackelberg games have received significant attention in recent
years~\citep{conitzer2006computing} due to their applications, for example, in security domains~\cite{tambe2011security}.

Work on equilibrium refinements in the context of \emph{Stackelberg
extensive-form games} has only started recently. Akin to usual extensive-form game refinements, \emph{Stackelberg
equilibrium (SE)} refinements should guarantee both the optimality of the commitment
off the equilibrium path and some form of robustness against small trembles of
the opponent.

To our knowledge, there is only one prior study of refinements for Stackelberg
extensive-form games~\citep{DBLP:conf/ijcai/FarinaMK0S18}. They
characterize a set of SE refinements based on what solutions can be obtained by imposing a perturbation scheme on the game---where players tremble onto suboptimal strategies with some small probabilities---and taking the limit as the trembling probability approaches zero.
They prove that, for any perturbation scheme, all the limit points of sequences of SEs in a perturbed game are SEs of the original, unperturbed game. Interestingly, they prove that when restricting attention to the common tie-breaking rules for the follower (\emph{strong} SE assumes the follower breaks ties in the best way for the leader and \emph{weak} SE assumes the follower breaks tie in the worst way for the leader), this is no longer the case.
Their approach does not start from a game-theoretic, axiomatic definition of the refinement concept. As we show in this paper, their approach captures only a strict subset of the solutions that are consistent with our natural game-theoretically defined refinement concept.
One way to view this is that their operational definition is deficient in that it does not characterize all the solutions that are consistent with the natural, axiomatic definition of the refinement concept. Another view is that they have an operational definition and we provide a generalization.

In terms of complexity, they prove that 
finding any SE
is
$\mathsf{NP}$-hard. (Hardness had previously been proven for finding a strong SE~\citep{Letchford:2010:COS:1807342.1807354}.)
Therefore, finding any SE refinement is also $\mathsf{NP}$-hard.

\textbf{Our contributions}. In this paper, we formally define the 
\emph{quasi-perfect Stackelberg equilibrium (QPSE)} refinement game theoretically in the same axiomatic fashion as QPE was defined for non-Stackelberg games~\citep{van1984relation}. As in the case of QPEs, our
definition is based on a set of properties of the players' strategies, and it cannot
be directly used to search for a QPSE. Subsequently, we define a class of
perturbation schemes for the sequence form such that any limit point of a sequence of SEs in a perturbed game is a QPSE. This class of perturbation schemes
strictly includes those used to find a QPE
by~\citeauthor{miltersen2010computing}~\shortcite{miltersen2010computing}.
%
%
%
Then, we extend the algorithm by~\citet{Cermak16:Using} to the case of QPSE
computation. We derive the corresponding mathematical program for computing a \emph{Stackelberg extensive-form correlated equilibrium (SEFCE)} when a
perturbation scheme is introduced and we discuss how the individual steps of the
algorithm change. In particular, the implementation of our algorithm is much
more involved, requiring the combination of branch-and-bound techniques with
arbitrary-precision arithmetic to deal with small perturbations. This does not
allow a direct application of off-the-shelf solvers. Finally, we experimentally
evaluate the scalability of our algorithm.

\section{Preliminaries}\label{sec:preliminaries}

Using standard notation~\citep{shoham2008multiagent}, a \emph{Stackelberg extensive-form game (SEFG)} of imperfect information is a tuple $(\mathcal{N},\mathcal{H},\mathcal{Z},\mathcal{A},\rho,\chi,\mathcal{C},u,\mathcal{I} )$.
$\mathcal{N} = \{\ell,f \}$ is the set of players, the leader and the follower.
$\mathcal{H} = \mathcal{H}_c \cup \mathcal{H}_\ell \cup \mathcal{H}_f$ is the set of nonterminal nodes, where $\mathcal{H}_c$ is the set of chance nodes, while $\mathcal{H}_\ell$ and $\mathcal{H}_f$ are the sets of leader's and follower's decision nodes, respectively.
$\mathcal{Z}$ is the set of terminal nodes.
$\mathcal{A} = \mathcal{A}_c \cup \mathcal{A}_\ell \cup \mathcal{A}_f $ is the set of actions, where $\mathcal{A}_c$ contains chance moves, while $\mathcal{A}_\ell$ and $\mathcal{A}_f$ are the sets of leader's and follower's actions, respectively.
$\rho: \mathcal{H} \rightarrow 2^\mathcal{A}$ is the action function that assigns to each nonterminal node a set of available actions.
$\chi: \mathcal{H} \times \mathcal{A} \rightarrow \mathcal{H} \cup \mathcal{Z}$ is the successor function that defines the node reached when an action is performed in a nonterminal node.
$\mathcal{C} : \mathcal{H} \cup \mathcal{Z} \rightarrow[0,1]$ assigns each node with its probability of being reached given chance moves.
$u=\{u_\ell, u_f\}$, where $u_\ell, u_f: \mathcal{Z} \rightarrow \mathbb{R}$ specify leader's and follower's payoffs, respectively, in each terminal node.
Finally, $\mathcal{I} = \{ \mathcal{I}_\ell, \mathcal{I}_f \}$, where $\mathcal{I}_\ell$ and $\mathcal{I}_f$ define partitions of $\mathcal{H}_\ell$ and $\mathcal{H}_f$, respectively, into information sets, that is, groups of nodes that are indistinguishable by the player.
For every information set $I \in \mathcal{I}_i$ and nodes $h, \hat{h} \in I$, it must be the case that $\rho(h) = \rho(\hat{h}) = A(I)$, otherwise player $i$ would be able to distinguish the two nodes.
%
%

We focus on \emph{perfect-recall} SEFGs in which no player forgets what she did or knew in the past, that is, for every $i \in \mathcal{N}$ and $I \in \mathcal{I}_i$, all nodes belonging to $I$ share the same player $i$'s moves on their paths from the root.
%
%
%
Thus, we can restrict the attention to \emph{behavioral strategies}~\cite{kuhn2016extensive}, which define, for every player $i \in \mathcal{N}$ and information set $I \in \mathcal{I}_i$, a probability distribution over the actions $A(I)$.
For $i \in \mathcal{N}$, let $\pi_i \in \Pi_i$ be a player $i$'s behavioral strategy, with $\pi_{i a}$ denoting the probability of playing action $a \in \mathcal{A}_i$.
%
%
%
Overloading notation, we use $u_i$ as if it were defined over strategies instead of terminal nodes. Specifically, $u_i(\pi_\ell, \pi_f)$ is player $i$'s expected utility when $\pi_\ell \in \Pi_\ell$ and $\pi_f \in \Pi_f$ are played.

Perfect-recall SEFGs admit an equivalent representation called the \emph{sequence form}~\cite{von1996efficient,Romanovskii62:Reduction}.
Every node $h \in \mathcal{H} \cup \mathcal{Z}$ defines a \emph{sequence} $\sigma_i(h)$ for player $i \in \mathcal{N}$, which is the ordered set of player $i$'s actions on the path from the root to $h$.
%
Let $\Sigma_i$ be the set of player $i$'s sequences.
As usual, let $\sigma_\emptyset \in \Sigma_i$ be a fictitious element representing the empty sequence.
%
%
In perfect-recall games, given an information set $I \in \mathcal{I}_i$, for any pair of nodes $h, \hat{h} \in I$ it holds $\sigma_{i}(h) = \sigma_{i}(\hat{h}) = \sigma_i(I) $.
%
%
Given $\sigma_i \in \Sigma_i$ and $a \in A(I)$ with $ I \in \mathcal{I}_i : \sigma_i = \sigma_i(I)$, we denote as $\sigma_i a$ the \emph{extended} sequence obtained by appending $a$ to $\sigma_i$.
Moreover, for any pair $\sigma_i, \hat{\sigma}_i \in \Sigma_i$, we write $\hat{\sigma}_i \sqsubseteq \sigma_i$ whenever $\hat{\sigma}_i$ is a \emph{prefix} of $\sigma_i$, that is, $\sigma_i$ can be obtained by extending $\hat{\sigma}_i$ with a finite number of actions.
Given $\sigma_i \in\Sigma_i$, we also let $I_i(\sigma_i)$ be the information set $I \in \mathcal{I}_i : \sigma_i = \sigma_i(I)a$ for some $a \in A(I)$.
In the sequence form, a strategy, called a \emph{realization plan}, assigns each sequence with its probability of being played.
For $i \in \mathcal{N}$, let $r_i \in R_i$ be a player $i$'s realization plan.
In order to be well-defined, a realization plan $r_i$ must be such that $r_i(\sigma_\emptyset)=1$ and, for $I \in \mathcal{I}_i$, $r_i(\sigma_i(I)) = \sum_{a \in A(I)} r_i(\sigma_i(I)a)$.
Finally, letting $\Sigma = \Sigma_\ell \times \Sigma_f$ be the set of sequence pairs $\sigma = (\sigma_\ell, \sigma_f)$, overloading notation, $u_i : \Sigma \rightarrow \mathbb{R}$ is player $i$'s utility function in the sequence form, with $u_i(\sigma) = \sum_{h \in \mathcal{Z} : \sigma_{\ell}(h) = \sigma_\ell \wedge \sigma_{f}(h) = \sigma_f } u_i(h) \mathcal{C}(h)$.
%
Moreover, we also use $u_i$ as if it were defined over realization plans.
Formally, $u_i(r_\ell, r_f) = \sum_{\sigma \in \Sigma} u_i(\sigma) r_\ell(\sigma_\ell) r_f(\sigma_f)$.

The sequence form is usually expressed with matrix notation as follows.
Player $i$'s utility function is a $|\Sigma_\ell| \times |\Sigma_f|$ matrix $U_i$ whose entries are the utilities $u_i(\sigma)$, for $\sigma \in \Sigma$.
Constraints defining $r_i \in R_i$ are expressed as $F_i r_i = f_i$, where: $F_i$ is a $( |\mathcal{I}_i|+1 ) \times |\Sigma_i|$ matrix, $f_i \in \mathbb{R}^{|\mathcal{I}_i|+1}$, and, overloading notation, $r_i \in \mathbb{R}^{|\Sigma_i|}$ is a vector representing $r_i$.
Specifically, introducing a fictitious information set $I_\emptyset$, the entry of $F_i$ indexed by $(I_\emptyset, \sigma_\emptyset)$ is 1, and, for $I \in \mathcal{I}_i$ and $\sigma_i \in \Sigma_i$, the entry indexed by $(I,\sigma_i)$ is $-1$ if $\sigma_i = \sigma_i(I)$, while it is $1$ if $\sigma_i = \sigma_i(I)a$ for some $a \in A(I)$.
$F_i$ is zero everywhere else.
Moreover, $f_i^T = (1 \; 0 \cdots 0)$.
%
%
%
%
%
%
%
%
Finally, given $r_\ell \in R_\ell$ and $r_f \in R_f$, we can write $u_i(r_\ell, r_f) = r_\ell^T  U_i r_f $.

In perfect-recall games, behavioral strategies and realization plans are equally expressive.
Given $r_i \in R_i$, we obtain an equivalent $\pi_i \in \Pi_i$ by setting, for all $I \in \mathcal{I}_i$ and $a \in A(I)$, $\pi_{i a} = \frac{r_i(\sigma_i(I)a)}{r_i(\sigma_i(I))}$ when $r_i(\sigma_i(I)) > 0$, while $\pi_{i a}$ can be any otherwise.
Similarly, $\pi_i \in \Pi_i$ has an equivalent $r_i \in R_i$ with $r_i(\sigma_i) = \prod_{a \in \sigma_i} \pi_{i a}$ for all $\sigma_i \in \Sigma_i$.\footnote{Here, the equivalence is in terms of probabilities that the strategies induce on terminal nodes, \emph{i.e.}, it is \emph{realization equivalence}.}


The solution concept associated with SEFGs is the SE.
%
%
An SEFG may have many SEs, depending on the leader's assumption on how the follower breaks ties among multiple best responses.
A leader's strategy is part of an SE if it is optimal for \emph{some} tie-breaking rule of the follower.
%
Letting $\mathsf{BR}_{\Gamma} (\pi_\ell) = \arg\max_{{\pi}_f \in \Pi_f} u_f(\pi_\ell, {\pi}_f)$ be the set of follower's best responses to $\pi_\ell \in \Pi_\ell$ in an SEFG $\Gamma$,
%
we have the following formal definition of SE.\footnote{In this paper, we define SEs following a characterization introduced by~\citeauthor{DBLP:conf/ijcai/FarinaMK0S18}~\shortcite{DBLP:conf/ijcai/FarinaMK0S18} (Lemma 2 in their paper).}

\begin{definition}\label{def:se}
	Given an SEFG $\Gamma$, $(\pi_\ell, \pi_f)$ is an \emph{SE} of $\Gamma$ if $\pi_f  \in \mathsf{BR}_{\Gamma} (\pi_\ell)$ and, for all $\hat{\pi}_\ell \in \Pi_\ell$, there exists $\hat{\pi}_f \in  \mathsf{BR}_{\Gamma} (\hat{\pi}_\ell)$ such that $u_\ell(\pi_\ell, \pi_f) \geq u_\ell(\hat{\pi}_\ell, \hat{\pi}_f)$.
\end{definition}

Many papers on SEs focus on \emph{strong SEs} (SSEs), which assume that the follower breaks ties in favor of the leader.
%

\begin{definition}\label{def:sse}
	Given an SEFG $\Gamma$, $(\pi_\ell, \pi_f)$ is an \emph{SSE} of $\Gamma$ if $\pi_f  \in \mathsf{BR}_{\Gamma} (\pi_\ell)$ and, for all $\hat{\pi}_\ell \in \Pi_\ell$ and $\hat{\pi}_f \in  \mathsf{BR}_{\Gamma} (\hat{\pi}_\ell)$, it holds $u_\ell(\pi_\ell, \pi_f) \geq u_\ell(\hat{\pi}_\ell, \hat{\pi}_f)$.
\end{definition}

Finally, SEs and SSEs can be defined analogously for SEFGs in sequence form (using the equivalence between behavioral strategies and realization plans).

\section{Definition of Quasi-Perfect Stackelberg Equilibrim}\label{sec:definitions}

In this section, we introduce QPSEs, which refine SEs in SEFGs
using an approach resembling that adopted by~\citeauthor{van1984relation}~\shortcite{van1984relation} for defining QPEs in EFGs.
%
%


%
First, we provide needed additional notation.
We say that $\pi_i \in \Pi_i$ is \emph{completely mixed} if $\pi_{ia} > 0$ for all $a \in \mathcal{A}_i$.
%
%
Given two information sets $I, \hat{I} \in \mathcal{I}_i$, we write $I \succeq \hat{I}$ whenever $\hat{I}$ follows $I$, \emph{i.e.}, there exists a path from $h \in I$ to $\hat{h} \in \hat{I}$.
We assume $I_\emptyset \succeq \hat{I}$ for all $\hat{I} \in \mathcal{I}_i$ such that there is no $I \neq \hat{I} \in \mathcal{I}_i : I \succeq \hat{I}$.
In perfect-recall games, $\succeq$ is a partial order over $\mathcal{I}_i \cup \{ I_\emptyset \}$.
%
%
%
%
%
%
%
%
Given $\pi_i, \hat{\pi}_i \in \Pi_i$ and $I \in \mathcal{I}_i \cup \{I_\emptyset\}$, $\pi_i \big/_{I} \hat{\pi}_i$ is equal to $\hat{\pi}_i$ at all $\hat{I} \in \mathcal{I}_i : I \succeq \hat{I}$, while it is equal to $\pi_i$ everywhere else.
Moreover, for $I \in \mathcal{I}_i$, we write $\pi_i =_{I} \hat{\pi}_i$ if $\pi_{i a} = \hat{\pi}_{i a}$ for all $a \in A(I)$.
%
%
Finally, given completely mixed strategies $\pi_\ell \in \Pi_\ell$, $\pi_f \in \Pi_f$ and $I \in \mathcal{I}_i$, $u_{i, I}(\pi_\ell,\pi_f)$ denotes player $i$'s expected utility given that $I$ has been reached and strategies $\pi_\ell$ and $\pi_f$ are played.
%
%


Next, we introduce a fundamental building block:
the idea of follower's best response at an information set $I \in \mathcal{I}_f$.
Intuitively, $\pi_f$ is an $I$-best response to $\pi_\ell$ whenever playing as prescribed by $\pi_f$ at the information set $I$ is part of some follower's best response to $\pi_\ell$ in the game following $I$, given that $I$ has been reached during play. 
Formally:
\begin{definition}\label{def:info_set_br}
	Given an SEFG $\Gamma$, a completely mixed $\pi_\ell \in \Pi_\ell$, and $I \in \mathcal{I}_f$, we say that $\pi_f \in \Pi_f$ is an \emph{$I$-best response} to $\pi_\ell$, written $\pi_f \in \mathsf{BR}_{I}(\pi_\ell)$, if the following holds:
	$$
		 \max_{\substack{ \hat{\pi}_f \in \Pi_f : \\  \pi_f =_{I} \hat{\pi}_f }}  u_{f, I} \left( \pi_\ell, \pi_f \big/_{I} \hat{\pi}_f \right)  =  \max_{\hat{\pi}_f \in \Pi_f} u_{f, I} \left( \pi_\ell, \pi_f \big/_{I} \hat{\pi}_f  \right).
	$$
\end{definition}

For $i \in \mathcal{N}$ and $\pi_i \in \Pi_i$, let $ \{ \pi_{i, k} \}_{k \in \mathbb{N}}$ be a sequence of completely mixed player $i$'s strategies with $\pi_i$ as a limit point.
We are now ready to define the refinement concept.
In words, in a QPSE, the leader selects an optimal strategy to commit to in \emph{all} information sets, given that the follower best responds to it at \emph{every} information set, following \emph{some} tie-breaking rule.
Specifically,
%
point (ii) in Definition~\ref{def:qpse} ensures that the leader's commitment is optimal also in those information sets that are unreachable in absence of players' errors.
Notice that the leader only accounts for follower's future errors,
%
while the follower assumes that only the leader can make mistakes in future.
This is in line with the idea underlying QPEs in EFGs~\cite{van1984relation}.\footnote{\citet{van1984relation} defines a QPE of an $n$-player extensive-form game as a strategy profile $(\pi_i)_{i \in \mathcal{N}}$ obtained as a limit point of a sequence of completely mixed strategy profiles $\{ (\pi_{i,k})_{i \in \mathcal{N}} \}_{k \in \mathbb{N}}$ such that $\pi_i \in \mathsf{BR}_{I}((\pi_{j,k})_{j \neq i \in \mathcal{N}})$ for all $i \in \mathcal{N}$ and $I \in \mathcal{I}_i$.}
\begin{definition}\label{def:qpse}
	Given an SEFG $\Gamma$, $(\pi_\ell, \pi_f) $ is a \emph{quasi-perfect Stackelberg equilibrium (QPSE)} of $\Gamma$ if there exist sequences $ \{ \pi_{i, k} \}_{k \in \mathbb{N}}$, defined for every $i \in \mathcal{N}$ and $\pi_i \in \Pi_i$, such that:
	\begin{enumerate}[(i)]
		\item $\pi_f \in \mathsf{BR}_{I}(\pi_{\ell,k}) $ for all $ I \in \mathcal{I}_f$;
		\item for all ${I} \in \mathcal{I}_\ell \cup \{ I_\emptyset \}$ and $\hat{\pi}_\ell \in \Pi_\ell$, there exists $ \hat{\pi}_f \in \Pi_f : \hat{\pi}_f \in \mathsf{BR}_{\hat{I}}(\pi_{\ell,k} \big/_{I} \hat{\pi}_{\ell,k}) $ for all $ \hat{I} \in \mathcal{I}_f$, with:
		\begin{align}\label{eq:qpse}
		u_{\ell} \left(\pi_{\ell,k} \big/_{I} \pi_\ell, \pi_{f,k} \right) \geq u_{\ell} \left(\pi_{\ell,k} \big/_{I} \hat{\pi}_\ell, \hat{\pi}_{f,k} \right).
		\end{align}
	\end{enumerate}
\end{definition}

As with SEs, we introduce the \emph{strong} version of QPSEs.\footnote{Since Equation~\eqref{eq:qpse} must hold for every $\hat{\pi}_\ell \in \Pi_\ell$ and $ \hat{\pi}_f \in \Pi_f : \hat{\pi}_f \in \mathsf{BR}_{\hat{I}}(\pi_{\ell,k} \big/_{I} \hat{\pi}_{\ell,k}) $ for all $ \hat{I} \in \mathcal{I}_f$, Definition~\ref{def:qpsse} assumes that the follower breaks ties in favor of the leader.}

\begin{definition}\label{def:qpsse}
	Given an SEFG $\Gamma$, $(\pi_\ell, \pi_f) $ is a \emph{quasi-perfect strong Stackelberg equilibrium (QPSSE)} of $\Gamma$ if there exist $ \{ \pi_{i, k} \}_{k \in \mathbb{N}}$, defined for every $i \in \mathcal{N}$ and $\pi_i \in \Pi_i$, such that:
	\begin{enumerate}[(i)]
		\item $\pi_f \in \mathsf{BR}_{I}(\pi_{\ell,k}) $ for all $ I \in \mathcal{I}_f$;
		\item for all ${I} \in \mathcal{I}_\ell \cup \{ I_\emptyset \}$, $\hat{\pi}_\ell \in \Pi_\ell$, and $ \hat{\pi}_f \in \Pi_f : \hat{\pi}_f \in \mathsf{BR}_{\hat{I}}(\pi_{\ell,k} \big/_{I} \hat{\pi}_{\ell,k}) $ for all $ \hat{I} \in \mathcal{I}_f$, Equation~\eqref{eq:qpse} holds.
	\end{enumerate}
\end{definition}

As we will show in Section~\ref{sec:perturbed_games}, QPSEs are refinements of SEs, that is, any QPSE is also an SE.

\section{Family of Perturbation Schemes for QPSE}\label{sec:perturbed_games}

We now introduce a family of \emph{perturbation schemes} for SEFGs in sequence form that satisfies the following fundamental property: \emph{limits of SEs in perturbed sequence-form SEFGs are QPSEs of the original unperturbed SEFGs as the magnitude of the perturbation goes to zero}.
In addition to being theoretically relevant, this result enables us to design an algorithm for computing QPSEs in SEFGs (Section~\ref{sec:sefce}).

%
%

%
%
%
%
%

\begin{definition}[$\xi$-perturbation scheme]\label{def:qp_pert}
Given an SEFG $\Gamma$ and $i \in \mathcal{N}$, let $\xi_i : (0,1] \times Q_i \to \mathbb{R}^+$ be a function that maps a perturbation magnitude $\epsilon \in (0,1]$ and a sequence $\sigma_i \in \Sigma_i$ to a positive lower-bound $\xi_i(\epsilon,\sigma_i)$ on the probability of playing $\sigma_i$ such that:
	\begin{enumerate}[(i)]
		\item $\xi_i(\epsilon,\sigma_i)$ is a polynomial in $\epsilon$, for all $\sigma_i \in \Sigma_i$;
		\item $\lim_{\epsilon \rightarrow 0^+} \xi_i(\epsilon,\sigma_i) = 0$, for all $\sigma_i \in \Sigma_i \setminus \{\sigma_\emptyset\}$;
		\item $\lim_{\epsilon \rightarrow 0^+} \frac{ \xi_i(\epsilon, \sigma_i(I) a) }{ \xi_i(\epsilon, \sigma_i(I)) } = 0$, for all $I \in \mathcal{I}_i, a \in A(I)$.
	\end{enumerate}
	Then, a \emph{$\xi_{i}$-perturbation scheme} for $R_i$ is a function $\epsilon \mapsto R_i(\epsilon)$ defined over $\epsilon \in (0,1]$ in which $R_i(\epsilon)$ is the set of all $r_i \in R_i$ such that $r_i(\sigma_i) \geq \xi_i(\epsilon,\sigma_i)$ for all $\sigma_i \in \Sigma_i$.
\end{definition}
In words, the lower-bounds on sequence probabilities enjoy the following properties: (i) they are polynomials in the variable $\epsilon$; (ii) they approach zero as $\epsilon$ goes to zero; and (iii) $\xi_i(\epsilon,\sigma_i(I) a)$ approaches zero faster than $\xi_i(\epsilon,\sigma_i(I))$.

We denote by $(\Gamma, \xi_\ell, \xi_f)$ a \emph{$\xi$-perturbed} SEFG with $\xi_i$-perturbation schemes.
We let $\Gamma(\epsilon)$ be a particular sequence-form \emph{$\xi$-perturbed game instance} obtained from $\Gamma$ by restricting each set of realization plans $R_i$ to be $R_i(\epsilon)$.
We denote by $r_i(\epsilon)$ any realization plan in $R_i(\epsilon)$, and we let $\xi_i(\epsilon) \in \mathbb{R}^{|Q_i|}$ be a vector whose components are the lower-bounds $\xi_i(\epsilon,\sigma_i)$.
We denote by $\tilde{r}_i(\epsilon) = r_i(\epsilon) -\xi_i(\epsilon)$ the \emph{residual} of $r_i(\epsilon)$, which represents the part of player $i$'s strategy that is not fixed by the perturbation.\footnote{We assume without loss of generality that $\Gamma(\epsilon)$ is well-defined, that is, each set $R_i(\epsilon)$ is non-empty for every $\epsilon \in (0,1]$.}

Next, we state our main result about sequences of SEs in $\xi$-perturbed games. We postpone the proof to Section~\ref{sec:limits_sse}.

\begin{theorem}\label{thm:limit_se}
	Given a $\xi$-perturbed SEFG $(\Gamma, \xi_\ell, \xi_f)$, let $\{ \epsilon_k \}_{k \in \mathbb{N}} \rightarrow 0$ and let $\{ (r_\ell(\epsilon_k), r_f(\epsilon_k)) \}_{k \in \mathbb{N}}$ be a sequence of SEs in $\Gamma(\epsilon_k)$.
	Then, any limit point $(\pi_\ell, \pi_f)$ of the sequence $\{ (\pi_{\ell, k}, \pi_{f, k}) \}_{k \in \mathbb{N}}$ is a QPSE of $\Gamma$,
	where $(\pi_{\ell, k}, \pi_{f, k})$ are equivalent to $(r_\ell(\epsilon_k), r_f(\epsilon_k))$ for all $k \in \mathbb{N}$.
\end{theorem}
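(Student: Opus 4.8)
The plan is to extract from the sequence of perturbed SEs a limiting strategy profile together with approximating sequences of completely mixed strategies, and then verify the two conditions in Definition~\ref{def:qpse}. First I would fix a limit point $(\pi_\ell,\pi_f)$ of $\{(\pi_{\ell,k},\pi_{f,k})\}_{k\in\mathbb N}$; passing to a subsequence, assume $(\pi_{\ell,k},\pi_{f,k})\to(\pi_\ell,\pi_f)$. Since each $r_i(\epsilon_k)$ lies in $R_i(\epsilon_k)$ and property~(ii) of the $\xi$-perturbation scheme forces $r_i(\epsilon_k,\sigma_i)\ge\xi_i(\epsilon_k,\sigma_i)>0$ for every sequence $\sigma_i$, the induced behavioral strategies $\pi_{i,k}$ are completely mixed for every $k$. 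These are exactly the sequences $\{\pi_{i,k}\}_{k\in\mathbb N}$ required in Definition~\ref{def:qpse}, so the combinatorial/structural hypotheses of that definition are met for free; the content is in establishing (i) and (ii).

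For condition~(i) I would show that for each follower information set $I\in\mathcal I_f$ the strategy $\pi_f$ is an $I$-best response to $\pi_{\ell,k}$ for all large $k$ (and hence, since only finitely many constraints are at stake and we may intersect over all $I$, for all $k$ along a suitable tail/subsequence). The key observation is that $(r_\ell(\epsilon_k),r_f(\epsilon_k))$ being an SE of $\Gamma(\epsilon_k)$ entails that $r_f(\epsilon_k)$ is a best response within $R_f(\epsilon_k)$ to $r_\ell(\epsilon_k)$; because the leader's strategy $\pi_{\ell,k}$ is completely mixed, every follower information set is reached with positive probability, so a global best response within the perturbed polytope is simultaneously, at each $I$, a best response in the game following $I$ among residual deviations — i.e. $\pi_f\in\mathsf{BR}_I(\pi_{\ell,k})$ after taking $\epsilon_k\to0$ inside the continuous utility $u_{f,I}$. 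Concretely I would argue that the gap between the two maxima in Definition~\ref{def:info_set_br} is continuous in $(\pi_\ell,\pi_f)$ and vanishes in the limit, using that the residual best-response property in $R_f(\epsilon_k)$ degrades by at most $O(\sum_{\sigma}\xi_f(\epsilon_k,\sigma))\to0$ relative to the unconstrained best response.

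Condition~(ii) is the substantive part and where I expect the main obstacle. Fix $I\in\mathcal I_\ell\cup\{I_\emptyset\}$ and a leader deviation $\hat\pi_\ell$. I must produce a follower completing strategy $\hat\pi_f$ that is an $\hat I$-best response to $\pi_{\ell,k}\big/_I\hat\pi_{\ell,k}$ at every $\hat I\in\mathcal I_f$, and that makes the leader's value under the original commitment dominate the value under the deviation, as in~\eqref{eq:qpse}. The natural candidate for $\hat\pi_f$ is obtained by solving, in the perturbed game $\Gamma(\epsilon_k)$, the follower's best-response problem against the perturbed deviation $r_\ell(\epsilon_k)\big/_I\hat r_\ell(\epsilon_k)$ — where $\hat r_\ell(\epsilon_k)$ is a perturbed realization plan equivalent to $\hat\pi_\ell$ that agrees with $r_\ell(\epsilon_k)$ outside the subtree rooted at $I$ — and then invoking the SE optimality of $(r_\ell(\epsilon_k),r_f(\epsilon_k))$ for the original unperturbed game against this modified leader plan; the SE definition (Definition~\ref{def:se}) gives exactly an inequality of the form $u_\ell(r_\ell(\epsilon_k),r_f(\epsilon_k))\ge u_\ell(r_\ell(\epsilon_k)\big/_I\hat r_\ell(\epsilon_k),\hat r_f(\epsilon_k))$ for some follower best response $\hat r_f(\epsilon_k)$. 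Two difficulties have to be handled carefully here: first, the deviation $\pi_{\ell,k}\big/_I\hat\pi_{\ell,k}$ must itself be realizable within $R_\ell(\epsilon_k)$ — this is where property~(iii) of the perturbation scheme ($\xi_\ell(\epsilon,\sigma_\ell(I)a)/\xi_\ell(\epsilon,\sigma_\ell(I))\to0$) is used, since it leaves enough slack at the information set $I$ to re-route residual mass onto the deviating continuation while still respecting all lower bounds in the subtree; second, the chosen $\hat r_f(\epsilon_k)$ must, after normalization and passage to a limit (along a further subsequence, using compactness of $\Pi_f$), converge to some $\hat\pi_f$ that is simultaneously an $\hat I$-best response at every $\hat I\in\mathcal I_f$ — again because $\pi_{\ell,k}\big/_I\hat\pi_{\ell,k}$ is completely mixed, reaching every follower node, so a perturbed global best response localizes to each information set in the limit, exactly as in the argument for~(i). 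Taking $k\to\infty$ in the leader inequality, using continuity of $u_\ell$ and $\pi_{i,k}\to\pi_i$, yields~\eqref{eq:qpse} for the limit profile, completing the verification.

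The crux, then, is the simultaneous bookkeeping in~(ii): constructing the perturbed deviating leader plan so that it stays inside $R_\ell(\epsilon_k)$ (property~(iii)), reading off the leader's value inequality from the unperturbed SE condition applied in $\Gamma(\epsilon_k)$, and ensuring the follower's perturbed response converges to an everywhere-$\hat I$-best response rather than merely a Nash-type best response on the equilibrium path. I would organize the write-up as: (1) subsequence extraction and complete-mixedness; (2) a lemma that residual best responses in $R_i(\epsilon)$ converge to everywhere-$I$-best responses, giving~(i) and the follower part of~(ii) at once; (3) the leader-deviation construction and the limiting argument for~\eqref{eq:qpse}.
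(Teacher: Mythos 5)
Your overall architecture matches the paper's: both conditions reduce to a localization lemma (limit points of perturbed global best responses are $I$-best responses at every follower information set), the grafted leader deviation is made feasible in $R_\ell(\epsilon_k)$ via property~(iii) of the perturbation scheme, and the leader inequality is extracted from the SE property of $(r_\ell(\epsilon_k),r_f(\epsilon_k))$ in $\Gamma(\epsilon_k)$. However, two of your steps would not go through as described. First, your justification of the localization lemma is too weak. Definition~\ref{def:info_set_br} demands \emph{exact} equality of two maxima over all of $\Pi_f$, and the quasi-perfection feature is precisely that residual mass sits on an action $a$ at $I$ only if $a$ is exactly optimal against \emph{unconstrained} follower continuations in $R_f(I)$ (the follower ignores her own future trembles). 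Your ``optimality degrades by at most $O(\sum_\sigma\xi_f(\epsilon_k,\sigma))$'' bound yields only approximate optimality at finite $k$ and does not bridge from ``best among residual deviations in $R_f(\epsilon_k)$'' to exact membership in $\mathsf{BR}_{I}(\pi_{\ell,k})$. The paper obtains the exact statement from complementary slackness on the primal--dual pair $\mathcal{P}(\epsilon)$/$\mathcal{D}(\epsilon)$ (Lemmas~\ref{lem:ne_pert_game}--\ref{lem:dual_active_constraints} and Theorem~\ref{thm:compl_slack}), combined with the observation that $\pi_{fa}>0$ forces $r_f(\epsilon_k,\sigma_f(I)a)>\xi_f(\epsilon_k,\sigma_f(I)a)$ eventually, by properties~(ii)--(iii) of the scheme.

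Second, your closing step for condition~(ii)---``taking $k\to\infty$ in the leader inequality yields~\eqref{eq:qpse}''---does not deliver~\eqref{eq:qpse}, which is a per-$k$ inequality whose arguments are $\pi_{\ell,k}\big/_{I}\pi_\ell$ and $\pi_{\ell,k}\big/_{I}\hat{\pi}_\ell$ (the \emph{limit} strategies grafted below $I$), whereas the SE property of $\Gamma(\epsilon_k)$ gives an inequality in $\pi_{\ell,k}$ and $\pi_{\ell,k}\big/_{I}\hat{\pi}_{\ell,k}$. These differ by terms that vanish only as $k\to\infty$, so the exact inequality at each finite $k$ does not follow from your direct construction. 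The paper avoids this by arguing by contradiction: the negation of~(ii) supplies a \emph{strict} inequality between the \eqref{eq:qpse}-form quantities, continuity then transfers strictness to the perturbed quantities for all large $k$, and this contradicts the (weak) SE inequality in $\Gamma(\epsilon_k)$. The contradiction structure is thus not cosmetic---it is what makes the continuity argument run in the direction you need.
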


Theorem~\ref{thm:limit_se} also allows us to conclude the following, as a consequence of Theorem~1 of~\citeauthor{DBLP:conf/ijcai/FarinaMK0S18}~\shortcite{DBLP:conf/ijcai/FarinaMK0S18}.
\begin{corollary}\label{cor:qp_ref}
	Any QPSE of an SEFG $\Gamma$ is an SE of $\Gamma$.
\end{corollary}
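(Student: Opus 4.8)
I want to check that a QPSE $(\pi_\ell,\pi_f)$ satisfies Definition~\ref{def:se}, namely that $\pi_f\in\mathsf{BR}_\Gamma(\pi_\ell)$ and that for every $\hat\pi_\ell\in\Pi_\ell$ there is a follower best response $\hat\pi_f$ to $\hat\pi_\ell$ with $u_\ell(\pi_\ell,\pi_f)\ge u_\ell(\hat\pi_\ell,\hat\pi_f)$. The plan is to route this through perturbed games. Theorem~\ref{thm:limit_se} already gives one direction: every limit of SEs of a $\xi$-perturbed game is a QPSE. I would establish the converse --- every QPSE is the limit of a sequence of SEs of \emph{some} $\xi$-perturbed game --- and then invoke Theorem~1 of~\citet{DBLP:conf/ijcai/FarinaMK0S18}, which says that every limit of SEs of a perturbed game is an SE of the unperturbed game. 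Since a $\xi$-perturbation scheme is a particular perturbation scheme in the sense of that theorem, the corollary follows.

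For the converse I would work from the witnessing completely mixed sequences $\{\pi_{i,k}\}_k$ of Definition~\ref{def:qpse} (and their realization-form counterparts $\{r_{i,k}\}_k$) and build a $\xi$-perturbation scheme around them. After restricting to a subsequence, one wants polynomials $\xi_i$ with $r_{i,k}(\sigma_i)\ge\xi_i(\epsilon_k,\sigma_i)$ meeting conditions (i)--(iii) of Definition~\ref{def:qp_pert}; the decay conditions (ii)--(iii) there --- deeper sequences are played with relatively vanishing probability --- are precisely what the $I$-best-response clauses (i)--(ii) of Definition~\ref{def:qpse} force on the witnessing sequences. Turning the asymptotics into an \emph{honestly polynomial} lower bound is a curve-selection argument on the semialgebraic set carved out by the best-response conditions (first-order formulas over polynomial utilities and $\max$ operators): select an analytic branch $\epsilon\mapsto(r_\ell(\epsilon),r_f(\epsilon))$ landing at $(\pi_\ell,\pi_f)$, then reparametrise and truncate. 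It then remains to verify that along this scheme the perturbed games $\Gamma(\epsilon_k)$ admit SEs whose behavioral-strategy limits are $(\pi_\ell,\pi_f)$; this is where conditions (i) and (ii) of Definition~\ref{def:qpse} are used --- they are engineered to be the $\epsilon\to 0^+$ shadows of, respectively, the follower-best-response and the leader-optimality clauses of an SE of $\Gamma(\epsilon_k)$, with (ii) instantiated at $I=I_\emptyset$.

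The step I expect to be the real obstacle is manufacturing this $\xi$-perturbation scheme: one must ensure the perturbed feasible sets $R_i(\epsilon_k)$ are nonempty and contain the $r_{i,k}$, upgrade Puiseux-type asymptotics to genuine polynomials without breaking conditions (i)--(iii) of Definition~\ref{def:qp_pert}, and --- most delicately --- choose $\xi$ so that the SEs of $\Gamma(\epsilon_k)$ converge back to the \emph{given} QPSE and not to some other one (much of this may already be implicit in the analysis behind Theorem~\ref{thm:limit_se}). If one prefers to avoid the perturbation detour, the same conclusion is reachable directly from Definition~\ref{def:qpse}: condition (i) together with continuity of $u_f$ gives $\pi_f\in\mathsf{BR}_\Gamma(\pi_\ell)$; and applying condition (ii) with $I=I_\emptyset$ --- where $\pi_{\ell,k}\big/_{I_\emptyset}\pi_\ell=\pi_\ell$ and $\pi_{\ell,k}\big/_{I_\emptyset}\hat\pi_\ell=\hat\pi_\ell$ --- yields $u_\ell(\pi_\ell,\pi_{f,k})\ge u_\ell(\hat\pi_\ell,\hat\pi_{f,k})$ for some $\hat\pi_f$ with $\hat\pi_f\in\mathsf{BR}_{\hat I}(\hat\pi_{\ell,k})$ for all $\hat I\in\mathcal{I}_f$; the latter forces $\hat\pi_f\in\mathsf{BR}_\Gamma(\hat\pi_\ell)$ in the limit (one-shot-deviation principle plus continuity), and letting $k\to\infty$ in the inequality delivers the leader-optimality clause of Definition~\ref{def:se}.
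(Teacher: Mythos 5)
Your proposal is correct, and your second (``direct'') route is in fact the one that actually proves the corollary; it differs from what the paper writes. The paper offers no proof beyond the one-line claim that the corollary follows from Theorem~\ref{thm:limit_se} together with Theorem~1 of \citet{DBLP:conf/ijcai/FarinaMK0S18} --- precisely your plan~A --- and, read literally, that derivation has exactly the gap you flag: it presupposes that every QPSE arises as a limit of SEs of \emph{some} $\xi$-perturbed game, a converse to Theorem~\ref{thm:limit_se} that the paper never establishes (and your curve-selection/Puiseux sketch shows how nontrivial it would be to establish). Your plan~B bypasses all of this: condition (i) of Definition~\ref{def:qpse}, upgraded from $I$-best responses at every $I\in\mathcal{I}_f$ against the completely mixed $\pi_{\ell,k}$ to a global best response (via Lemma~\ref{lem:lemma_br_I} and backward induction, or the one-shot-deviation principle), plus upper hemicontinuity of $\mathsf{BR}_\Gamma$, gives $\pi_f\in\mathsf{BR}_\Gamma(\pi_\ell)$; condition (ii) at $I=I_\emptyset$, where $\pi_{\ell,k}\big/_{I_\emptyset}\pi_\ell=\pi_\ell$ and $\pi_{\ell,k}\big/_{I_\emptyset}\hat\pi_\ell=\hat\pi_\ell$, gives the leader-optimality clause in the limit. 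The only detail worth spelling out is that the witness $\hat\pi_f$ may depend on $k$, so you should pass to a convergent subsequence of the $\hat\pi_f$'s (and of the associated $\hat\pi_{f,k}$'s) before taking the limit in Equation~\eqref{eq:qpse}; compactness of $\Pi_f$ and continuity of $u_\ell$ then finish the argument. Net assessment: your direct argument is more rigorous than what the paper provides and should be preferred; the perturbation detour buys nothing here.
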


Reuirements (ii)-(iii) in Definition~\ref{def:qp_pert} cannot be removed:
\begin{observation}\label{obv:obs_pert}
There are $\xi$-perturbed SEFGs $(\Gamma, \xi_\ell, \xi_f)$ with $\xi_i$-perturbation schemes that violate point (ii) or (iii) in Definition~\ref{def:qp_pert} for which Theorem~\ref{thm:limit_se} does not hold.
\end{observation}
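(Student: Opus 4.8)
The plan is to prove the observation by constructing two explicit small SEFGs, one exhibiting the necessity of condition~(ii) and one exhibiting the necessity of condition~(iii) of Definition~\ref{def:qp_pert}, together with perturbation schemes $\xi_\ell,\xi_f$ that satisfy all the remaining requirements of Definition~\ref{def:qp_pert} but violate exactly the one in question. For each construction I would exhibit a sequence $\{\epsilon_k\}\to 0$ and a sequence of SEs $(r_\ell(\epsilon_k),r_f(\epsilon_k))$ of $\Gamma(\epsilon_k)$ whose limit point, translated into behavioral strategies, fails to be a QPSE of $\Gamma$; this immediately shows Theorem~\ref{thm:limit_se} does not hold once (ii) or (iii) is dropped.

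For the necessity of~(ii), the natural idea is to use a game with an information set $I$ that is \emph{unreachable} in the unperturbed game once the leader plays her equilibrium strategy, but where the leader's commitment inside $I$ matters for point~(ii) of Definition~\ref{def:qpse}. If $\xi_\ell$ is allowed to violate~(ii) at some sequence $\sigma_\ell$, then $r_\ell(\epsilon)(\sigma_\ell)$ is bounded away from $0$ for all $\epsilon$, so in the limit the leader is \emph{forced} to put positive probability on a suboptimal continuation; equivalently, the limit strategy is pinned down at $I$ by the lower bound rather than by optimality against the follower's errors. I would design payoffs so that the unique SE of every $\Gamma(\epsilon_k)$ has this frozen, suboptimal behavior at $I$ in the limit, while the true QPSE requires a different action there — giving a limit point that is an SE but not a QPSE. (Alternatively one can freeze a follower sequence via $\xi_f$; the cleaner construction is probably on the follower's side, so that the limiting $\pi_f$ is not in $\mathsf{BR}_I(\pi_{\ell,k})$ for the relevant $I$, violating point~(i) of Definition~\ref{def:qpse}.)

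For the necessity of~(iii), the point is that~(iii) is precisely what forces the \emph{relative} trembles to vanish, i.e.\ $\xi_i(\epsilon,\sigma_i(I)a)/\xi_i(\epsilon,\sigma_i(I))\to 0$, which in behavioral-strategy terms says the induced behavioral probability lower bound at any deeper action goes to $0$. If~(iii) fails, one can have $\xi_i(\epsilon,\sigma_i(I)a)$ and $\xi_i(\epsilon,\sigma_i(I))$ of the same polynomial order, so the behavioral strategy at $I$ is bounded away from a pure action even in the limit. I would build a game where the follower has a later information set at which, in any QPSE, he must play a particular action with probability $1$ (it is his unique best response in the subgame), but where a scheme violating~(iii) forces a strictly positive probability on an inferior action there; then the leader's best response to this distorted follower behavior yields a limit profile violating point~(i) (or~(ii)) of Definition~\ref{def:qpse}.

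The main obstacle will be the second construction: I must make sure that when (iii) is relaxed, the perturbation scheme still satisfies (i) (polynomiality) and (ii) ($\xi_i(\epsilon,\sigma_i)\to 0$ for all nonempty sequences), so that the \emph{only} defect is the failure of (iii); and simultaneously the game must be rigged so that this single defect genuinely breaks the QPSE property rather than merely changing which QPSE is selected. Verifying that the displayed sequences are actual SEs of the perturbed games — checking Definition~\ref{def:se} for the perturbed realization-plan polytopes $R_\ell(\epsilon),R_f(\epsilon)$, including the follower's best-response condition and the leader's commitment optimality across all $\hat\pi_\ell$ — is the routine-but-delicate bookkeeping, but in games this small it should reduce to comparing a handful of payoff expressions that are affine in the perturbation parameters.
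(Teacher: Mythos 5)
Your approach is the same as the paper's: exhibit a tiny game together with a perturbation scheme that satisfies all of Definition~\ref{def:qp_pert} except the one condition under scrutiny, and show that the resulting lower bounds \emph{pin} the limit behavioral strategy to a suboptimal mixture, so the limit of SEs fails Definition~\ref{def:qpse}. Your diagnosis of the mechanism is exactly right for both conditions: violating (ii) freezes a sequence probability away from zero, and violating (iii) freezes a behavioral \emph{ratio} away from $0$, and either defect forces non-optimal play at a vanishingly-reached information set in the limit. The only thing separating your proposal from a proof is that you never actually write down the games, and for an existence claim the construction is the entire content; the paper does supply it, and it is simpler than what you envision. It uses a single game (Figure~\ref{fig:examples}b) for both conditions and puts the defective perturbation entirely on the \emph{leader's} side: the leader reaches a second decision point $\ell.2$ only via a sequence $a_\ell^2$ of vanishing probability, and setting $\xi_\ell(\epsilon,a_\ell^2 a_\ell^3)=\xi_\ell(\epsilon,a_\ell^2 a_\ell^4)=\epsilon/3$ against $\xi_\ell(\epsilon,a_\ell^2)=\epsilon$ forces $\pi_\ell(a_\ell^3)=1/3$, $\pi_\ell(a_\ell^4)=2/3$ in every limit point, whereas a QPSE requires the pure optimal action $a_\ell^4$ at $\ell.2$ (this violates point~(ii) of Definition~\ref{def:qpse}); condition~(ii) of the perturbation scheme is handled by the same game with $\xi_\ell(\epsilon,a_\ell^2)=1/3$. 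You flagged the follower-side version as ``probably cleaner,'' but the leader-side version avoids having to reason about the follower's best-response correspondence at all: the contradiction lives in a one-player subtree where optimality is just a payoff comparison, which is why the paper's verification reduces to two lines. Your follower-side variant would also work, but it is the harder of the two routes, not the easier one.
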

\begin{proof}
Consider the SEFG in Figure~\ref{fig:examples}b with $\xi_\ell(\epsilon,a_\ell^1)=\xi_\ell(\epsilon,a_\ell^2)=\epsilon$ and $\xi_\ell(\epsilon,a_\ell^2a_\ell^3)= \xi_\ell(\epsilon,a_\ell^2a_\ell^4)= \frac{\epsilon}{3}$, which violates requirement (iii) in Definition~\ref{def:qp_pert}.
Clearly, any SE of $\Gamma(\epsilon)$ requires $r_\ell(\epsilon,a_\ell^1) = 1 - \epsilon$, $r_\ell(\epsilon,a_\ell^2) = \epsilon$, $r_\ell(\epsilon,a_\ell^2a_\ell^3) = \frac{\epsilon}{3}$, and $r_\ell(\epsilon, a_\ell^2 a_\ell^4) = \frac{2 \epsilon}{3}$.
Thus, any limit point of a sequence of SEs has $\pi_\ell(a_\ell^3) = \frac{1}{3}$ and $\pi_\ell(a_\ell^4) = \frac{2}{3}$, which cannot be the case in a QPSE of $\Gamma$, as the leader's optimal strategy at $\ell.2$ is to play $a_\ell^4$.
As for requirement (ii), we can build a similar example by setting $\xi_\ell(\epsilon,a_\ell^2)=\frac{1}{3}$.
\end{proof}

\begin{figure}[!h]
	\centering
	\includegraphics[scale=.8]{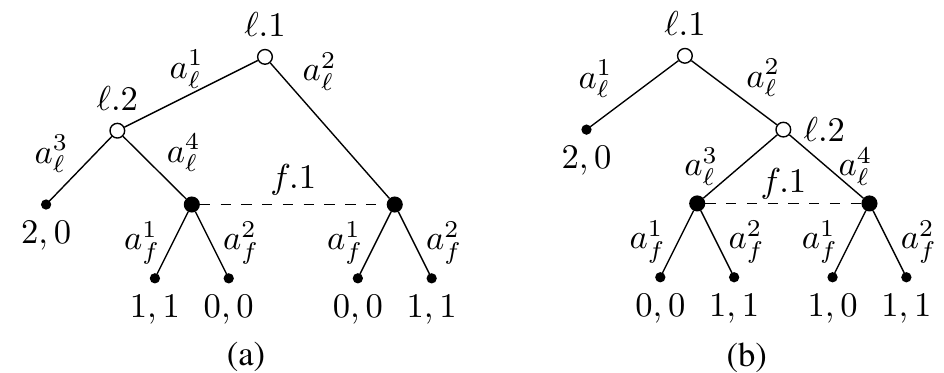}
	\caption{\small Examples SEFGs.}
	\label{fig:examples}
\end{figure}

\citeauthor{miltersen2010computing}~\shortcite{miltersen2010computing} introduced the idea of perturbing sequence-form EFGs in order to find a QPE. Our perturbation scheme generalizes theirs, where $\xi_i(\epsilon, \sigma_i) = \epsilon^{|\sigma_i|}$ for all $\sigma_i \in \Sigma_i \setminus \{\sigma_\emptyset\}$, with $|\sigma_i|$ being the number of actions in $\sigma_i$. There are games where our perturbation captures QPSEs that are \emph{not} obtainable with theirs. For instance, in the SEFG in Figure~\ref{fig:examples}a, $(\pi_\ell, \pi_f)$, with $\pi_\ell(a_\ell^1)=\pi_\ell(a_\ell^3)=1$, $\pi_\ell(a_\ell^2)=\pi_\ell(a_\ell^4)=0$, and $\pi_f(a_f^1)= \pi_f(a_f^2) = \frac{1}{2}$, is a QPSE that cannot be obtained with their perturbation scheme while it is reachable by setting $\xi_\ell(\epsilon, a_\ell^2)=\epsilon^2$. We observe that $(\pi_\ell, \pi_f)$ is also a QPE when we look at the game as an EFG without commitment; this shows that our perturbation scheme generalizes theirs also for QPEs.

Finally, when restricting the attention to SSEs, we can state the following: limits of SSEs are QPSSEs.
We make this formal in Theorem~\ref{thm:limit_sse} in the Appendix.

\section{Best Responses in $\xi$-Perturbed Games}\label{sec:props_games}

We now study properties of the follower's best responses to the leader's strategy in $\xi$-perturbed games. These properties will be useful for proving our results later in the paper.

In the following, letting $\Sigma_i(a) = \{ \sigma_i \in \Sigma_i \mid  a \in \sigma_i  \}$ for all $a \in \mathcal{A}_i$, $\Sigma_i(I) = \bigcup_{a \in A(I)} \Sigma_i(a)$ denotes player $i$'s sequences that pass through  information set $I \in \mathcal{I}_i$.
For ease of presentation, given $I \in \mathcal{I}_i$, $g_{i, I}(r_\ell,r_f) = \sum_{\sigma \in \Sigma : \sigma_i \in \Sigma_i(I)} u_i(\sigma) r_\ell(\sigma_\ell) r_f(\sigma_f)$ denotes player $i$'s expected utility contribution from terminal nodes reachable from $I$.
%
%
%
%
Finally, for $I \in \mathcal{I}_i$, let $R_i(I) \subseteq R_i$ be the set of $r_i \in R_i : r_i(\sigma_i(I))=1$, while, for $a \in A(I)$, $R_i(a) \subseteq R_i(I)$ is the set of $r_i \in R_i : r_i(\sigma_i(I)a)=1$.

%
%


Let $\mathsf{BR}_{\Gamma(\epsilon)}(r_\ell(\epsilon)) = \arg \max_{{r}_f(\epsilon) \in R_f(\epsilon)} u_f(r_\ell(\epsilon),{r}_f(\epsilon))$ be the set of follower's best responses to $r_\ell(\epsilon) \in R_\ell(\epsilon)$ in $\Gamma(\epsilon)$.
The next lemma gives a mathematical programming formulation of the follower's best-response problem in $\Gamma(\epsilon)$.
%
%
%

\begin{restatable}{lemma}{lemmanepert}\label{lem:ne_pert_game}
For every $r_\ell(\epsilon) \in R_\ell(\epsilon)$, $r_f(\epsilon) \in \mathsf{BR}_{\Gamma(\epsilon)}(r_\ell(\epsilon))$ if and only if $\tilde{r}_f(\epsilon)$ is optimal for Problem~$\mathcal{P}(\epsilon)$ below.
	%
	\begin{equation*}\label{prob:primal_tilde}
		\mathcal{P}(\epsilon)\ :\ \left\{\begin{aligned}
		\displaystyle\max_{\tilde{r}_f} &\quad r_\ell(\epsilon)^T  U_f \tilde{r}_f \\[-2mm]
		\textnormal{s.t.} &\quad F_f \tilde{r}_f = f_f  -  F_f \xi_f(\epsilon), \quad  \tilde{r}_f \geq 0.
		\end{aligned}\right.
	\end{equation*}
\end{restatable}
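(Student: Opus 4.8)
The plan is to prove the two directions of the equivalence by exploiting the affine change of variables $r_f(\epsilon) = \tilde r_f(\epsilon) + \xi_f(\epsilon)$, which converts the perturbed best-response problem over $R_f(\epsilon)$ into the (unperturbed-looking) Problem~$\mathcal{P}(\epsilon)$ over the residual. First I would recall that, by definition, $R_f(\epsilon)$ is the set of $r_f$ with $F_f r_f = f_f$ and $r_f(\sigma_f) \ge \xi_f(\epsilon,\sigma_f)$ for all $\sigma_f \in \Sigma_f$; equivalently, writing $\tilde r_f = r_f - \xi_f(\epsilon)$, these constraints become $F_f \tilde r_f = f_f - F_f\xi_f(\epsilon)$ and $\tilde r_f \ge 0$. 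Thus $r_f(\epsilon) \in R_f(\epsilon)$ if and only if its residual $\tilde r_f(\epsilon)$ is feasible for $\mathcal{P}(\epsilon)$, and this correspondence $r_f \leftrightarrow \tilde r_f$ is a bijection between the feasible sets.

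Next I would translate the objective. The follower's payoff is $u_f(r_\ell(\epsilon), r_f(\epsilon)) = r_\ell(\epsilon)^T U_f r_f(\epsilon) = r_\ell(\epsilon)^T U_f \tilde r_f(\epsilon) + r_\ell(\epsilon)^T U_f \xi_f(\epsilon)$. Since $r_\ell(\epsilon)$ is fixed, the term $r_\ell(\epsilon)^T U_f \xi_f(\epsilon)$ is a constant independent of the follower's choice. Therefore maximizing $u_f(r_\ell(\epsilon), r_f)$ over $r_f \in R_f(\epsilon)$ is equivalent to maximizing $r_\ell(\epsilon)^T U_f \tilde r_f$ over $\tilde r_f$ feasible for $\mathcal{P}(\epsilon)$, and the argmax sets correspond under the bijection. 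Putting the two pieces together: $r_f(\epsilon) \in \mathsf{BR}_{\Gamma(\epsilon)}(r_\ell(\epsilon))$, i.e.\ $r_f(\epsilon) \in \arg\max_{r_f \in R_f(\epsilon)} u_f(r_\ell(\epsilon), r_f)$, holds iff $\tilde r_f(\epsilon) = r_f(\epsilon) - \xi_f(\epsilon)$ is an optimal solution of $\mathcal{P}(\epsilon)$, which is exactly the claim. I would write out both the ``only if'' and ``if'' directions explicitly, but each is immediate once the feasible-set bijection and the constant-shift of the objective are established.

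I do not expect a serious obstacle here; the lemma is essentially a bookkeeping statement about an affine reparametrization of a linear program. The only points requiring a little care are: (a) verifying that $R_f(\epsilon)$ is genuinely captured by the constraint set of $\mathcal{P}(\epsilon)$ after the substitution — in particular that the single equality system $F_f r_f = f_f$ transforms cleanly into $F_f\tilde r_f = f_f - F_f\xi_f(\epsilon)$, using linearity of $F_f$; (b) noting that $\mathcal{P}(\epsilon)$ is feasible precisely because $\Gamma(\epsilon)$ is assumed well-defined (so $R_f(\epsilon) \neq \emptyset$), which guarantees the argmax is nonempty and the ``iff'' is not vacuous; and (c) being explicit that the additive constant $r_\ell(\epsilon)^T U_f \xi_f(\epsilon)$ does not affect the set of maximizers. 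None of these is deep, so the proof will be short.
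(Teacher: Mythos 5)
Your proposal is correct and follows essentially the same route as the paper's proof: the affine substitution $\tilde r_f = r_f - \xi_f(\epsilon)$, the observation that $F_f r_f = f_f,\ r_f \ge \xi_f(\epsilon)$ becomes $F_f \tilde r_f = f_f - F_f\xi_f(\epsilon),\ \tilde r_f \ge 0$, and the dropping of the constant $r_\ell(\epsilon)^T U_f \xi_f(\epsilon)$ from the objective. Your version is merely more explicit about the feasible-set bijection and the non-vacuity of the argmax, which the paper leaves implicit.
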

All omitted proofs are in the Appendix.

The dual of Problem~$\mathcal{P}(\epsilon)$ above is as follows.
\begin{proposition}\label{prop:dual_br_pert}
	For $r_\ell(\epsilon) \in R_\ell(\epsilon)$, Problem~$\mathcal{D}(\epsilon)$ below is the dual of Problem~$\mathcal{P}(\epsilon)$, where $v_f \in \mathbb{R}^{|\mathcal{I}_f| + 1}$ is the vector of dual variables.
	%
	\begin{subequations}\label{prob:dual_br_pert}
		\begin{empheq}[left={\mathcal{D}(\epsilon)\ :\ \empheqlbrace}]{align}
		\min_{v_f} &\quad \left( f_f - F_f \xi_f(\epsilon) \right)^T v_f \\[-2mm]
		\textnormal{s.t.} &\quad F_f^T v_f \geq U_f^T  r_\ell(\epsilon). \label{cons:dual}
		\end{empheq}
	\end{subequations}
	%
\end{proposition}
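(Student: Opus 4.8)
The plan is to read Problem~$\mathcal{P}(\epsilon)$ as a linear program in standard equality form in the single vector of decision variables $\tilde r_f$, and then to invoke the textbook primal–dual correspondence. First I would note that the leader's realization plan $r_\ell(\epsilon)$ is a fixed parameter here, so the objective $r_\ell(\epsilon)^T U_f \tilde r_f$ is a genuine linear functional of $\tilde r_f$, namely $c^T \tilde r_f$ with $c := U_f^T r_\ell(\epsilon)$; likewise the constraint matrix is the fixed matrix $F_f$ and the right-hand side is the fixed vector $b := f_f - F_f \xi_f(\epsilon)$, since $\xi_f(\epsilon)$ is a constant vector once $\epsilon$ is fixed.

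With this identification, $\mathcal{P}(\epsilon)$ is exactly $\max\{ c^T x : A x = b,\ x \ge 0 \}$ with $A := F_f$ and $x := \tilde r_f$. I would then form its Lagrangian, dualizing the equality constraints with a sign-free multiplier $v_f$: writing $L(x, v_f) = c^T x + v_f^T(b - Ax) = b^T v_f + (c - A^T v_f)^T x$, maximizing over $x \ge 0$ gives $b^T v_f$ when $A^T v_f \ge c$ and $+\infty$ otherwise, so the dual is $\min\{ b^T v_f : A^T v_f \ge c \}$. Substituting back $A = F_f$, $b = f_f - F_f \xi_f(\epsilon)$, and $c = U_f^T r_\ell(\epsilon)$ yields precisely Problem~$\mathcal{D}(\epsilon)$ with $v_f$ unrestricted in sign; since $F_f$ has $|\mathcal{I}_f| + 1$ rows, $v_f \in \mathbb{R}^{|\mathcal{I}_f| + 1}$, as claimed.

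I expect essentially no obstacle: the argument is a mechanical application of LP duality, and the only points that need a moment's care are bookkeeping — transposing the objective to expose the cost vector $U_f^T r_\ell(\epsilon)$, tracking which data lands on the right-hand side, and observing that equality constraints in the primal make the dual variables sign-unconstrained while the minimization direction and the $\ge$ in~\eqref{cons:dual} come from the primal being a maximization over $x \ge 0$. If one additionally wanted strong duality (zero duality gap), it would follow from Lemma~\ref{lem:ne_pert_game} together with the standing assumption that $R_f(\epsilon) \neq \emptyset$ — which makes $\mathcal{P}(\epsilon)$ feasible — and boundedness of its feasible region; but the statement only asserts that $\mathcal{D}(\epsilon)$ \emph{is} the dual, which is purely the above construction.
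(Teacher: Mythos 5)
Your proposal is correct and is exactly the mechanical LP-duality computation the paper has in mind (the paper omits a proof of this proposition precisely because it is this standard identification of $A = F_f$, $b = f_f - F_f\xi_f(\epsilon)$, $c = U_f^T r_\ell(\epsilon)$ followed by the textbook dual of $\max\{c^T x : Ax=b,\ x\ge 0\}$). Your closing remark is also the right caveat: the proposition asserts only the form of the dual, and strong duality is invoked separately where needed (e.g., in the complementary-slackness argument for Theorem~\ref{thm:compl_slack}).
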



\begin{remark}
	Constraints~\eqref{cons:dual} in Problem~$\mathcal{D}(\epsilon)$ defined above ensure that, for every $I \in \mathcal{I}_f$ and $a \in A(I)$, we have
	\begin{align}\label{eq:constraints_rewritten}
		v_{f, I} \geq \hspace{-0.2cm} \sum_{\substack{\sigma \in \Sigma : \sigma_f = \sigma_f(I) a }} \hspace{-0.3cm} u_f(\sigma) r_\ell(\epsilon,\sigma_\ell)  +\hspace{-0.2cm} \sum_{\substack{ \hat{I} \in \mathcal{I}_f:  \sigma_f(\hat{I}) = \sigma_f(I)a}} \hspace{-0.5cm}v_{f, \hat{I}}.
	\end{align}	
\end{remark}

The optimal solutions to Problem~$\mathcal{D}(\epsilon)$ enjoy important properties that are stated in the following lemmas.
The first one says that, in an optimal solution, each variable $v_{f, I}$ must equal the maximum possible expected utility the follower can achieve following information set $I \in \mathcal{I}_f$.
The second lemma says that if an optimal solution to Problem~$\mathcal{D}(\epsilon)$ satisfies Constraint~\eqref{eq:constraints_rewritten} with equality for an information set $I \in \mathcal{I}_f$ and an action $a \in A(I)$, then playing $a$ at $I$ is optimal in the game following $I$.

\begin{restatable}{lemma}{lemmaperta}\label{lem:optimal_dual}
	For every $r_\ell(\epsilon) \in R_\ell(\epsilon)$, if $v_f^\ast \in \mathbb{R}^{|\mathcal{I}_f| + 1}$ is optimal for Problem~$\mathcal{D}(\epsilon)$, then for every $I \in \mathcal{I}_f$:
	\begin{align}\label{eq:inductive_argument}
		v_{f, I}^\ast = \max_{\substack{\hat{r}_f \in R_f(I) }} g_{f, I}( r_\ell(\epsilon), \hat{r}_f).
	\end{align}
\end{restatable}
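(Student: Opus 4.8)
The plan is to exploit linear-programming duality together with the recursive (tree) structure of the sequence-form constraint matrix $F_f$. First I would set up a candidate primal-dual pair that certifies the claimed value. Fix $r_\ell(\epsilon) \in R_\ell(\epsilon)$ and an information set $I \in \mathcal{I}_f$. Consider the ``subgame'' linear program $\max_{\hat r_f \in R_f(I)} g_{f,I}(r_\ell(\epsilon),\hat r_f)$: since $R_f(I)$ is the polytope of realization plans with $r_f(\sigma_f(I)) = 1$ and $g_{f,I}$ only involves sequences in $\Sigma_f(I)$, this is itself a (small) sequence-form best-response LP restricted to the subtree rooted at $I$. I would take an optimal solution $v_f^\ast$ of $\mathcal{D}(\epsilon)$ and show that its restriction to the dual variables of information sets $\hat I \succeq I$ (i.e.\ $\hat I$ weakly following $I$), with $v_{f,I}^\ast$ playing the role of the ``root'' dual variable, is dual-feasible for the subgame LP; conversely any subgame-optimal plan can be lifted to a feasible point of $\mathcal{P}(\epsilon)$ tied down appropriately. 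Matching the two values then forces \eqref{eq:inductive_argument}.

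More concretely, the key steps in order are: (1) Rewrite the dual constraints as in \eqref{eq:constraints_rewritten}, so that each $v_{f,I}$ is bounded below by the payoff of action $a$ at $I$ (the immediate term $\sum_{\sigma_f = \sigma_f(I)a} u_f(\sigma) r_\ell(\epsilon,\sigma_\ell)$) plus the sum of $v_{f,\hat I}$ over the children information sets $\hat I$ with $\sigma_f(\hat I) = \sigma_f(I)a$. This already gives the inequality $v_{f,I}^\ast \ge \max_{\hat r_f \in R_f(I)} g_{f,I}(r_\ell(\epsilon),\hat r_f)$ by summing the child inequalities along any realization plan in $R_f(I)$ (an induction on the partial order $\succeq$, using perfect recall so that $\succeq$ is well-founded on $\mathcal{I}_f$). (2) For the reverse inequality, I would argue by downward induction on $\succeq$ that there is \emph{some} optimal $v_f^\ast$ attaining equality: at $\succeq$-maximal (deepest) follower information sets the claim is immediate because the corresponding dual constraints involve no further $v$-terms and minimization drives $v_{f,I}^\ast$ down to the best immediate action payoff; then propagate upward, noting that at each $I$ the objective of $\mathcal{D}(\epsilon)$ depends on $v_{f,I}$ with a strictly positive coefficient whenever information set $I$ is reached with positive probability — which it is, since $r_\ell(\epsilon)$ is bounded below by the perturbation $\xi_f(\epsilon) > 0$ and (by the well-definedness assumption on $\Gamma(\epsilon)$) $R_f(\epsilon)$ is nonempty, so the reach probability of $I$ is strictly positive. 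Hence at an optimum each $v_{f,I}^\ast$ is pushed down until some child constraint is tight, and the tight constraints trace out an optimal plan of the subgame LP whose value is exactly $v_{f,I}^\ast$. (3) Conclude that both inequalities hold, giving \eqref{eq:inductive_argument} for \emph{every} optimal $v_f^\ast$, not just a cleverly chosen one — the first step already handled ``$\ge$'' for all optima, and complementary slackness / the strict positivity of the objective coefficient handles ``$\le$''.

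The main obstacle I anticipate is step (2): one has to be careful that the argument ``$v_{f,I}^\ast$ is driven down to the max over children'' really applies to \emph{every} optimal solution of $\mathcal{D}(\epsilon)$, as the lemma is stated, rather than merely to one optimal solution. The subtlety is that the coefficient of $v_{f,I}$ in the dual objective $(f_f - F_f \xi_f(\epsilon))^T v_f$ is, up to sign, the residual reach probability $r_f(\epsilon,\sigma_f(I)) - (\text{contribution of }\xi_f)$ coming from the constraint row for $I$; I need to verify this coefficient is strictly positive for all $I \in \mathcal{I}_f$ under the standing assumption that $\Gamma(\epsilon)$ is well-defined and the lower bounds satisfy (iii) of Definition~\ref{def:qp_pert} (so that $\xi_f(\epsilon,\sigma_f(I)) > \sum_{a} \xi_f(\epsilon,\sigma_f(I)a)$ for $\epsilon$ small, leaving genuine slack). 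Once that strict positivity is in hand, minimization forces tightness of some child constraint at every optimum, and the rest is the bookkeeping of the telescoping sum over the subtree. I would also double-check the boundary case $I$ such that $\sigma_f(I) = \sigma_\emptyset$ is not in $\mathcal{I}_f$ here (the fictitious $I_\emptyset$ is excluded from the statement), so no special treatment of the empty sequence is needed.
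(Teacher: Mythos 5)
Your proposal is correct and follows essentially the same route as the paper's proof: both hinge on the observation that the objective coefficient of $v_{f,I}$ in $\mathcal{D}(\epsilon)$ equals $\xi_f(\epsilon,\sigma_f(I)) - \sum_{a \in A(I)} \xi_f(\epsilon,\sigma_f(I)a) > 0$, so every optimal dual solution drives each $v_{f,I}$ down to the minimum permitted by Constraint~\eqref{eq:constraints_rewritten}, and then an induction on the partial order $\succeq$ over $\mathcal{I}_f$ identifies that minimum with $\max_{\hat{r}_f \in R_f(I)} g_{f,I}(r_\ell(\epsilon),\hat{r}_f)$ via the decomposition of realization plans at $I$ into a first action plus subtree plans. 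Your split into separate ``$\geq$'' and ``$\leq$'' arguments is only a presentational difference from the paper's direct recursive computation, and you correctly flag the one genuine subtlety (strict positivity of the objective coefficient), which the paper likewise dispatches by appeal to the well-definedness assumption on $\Gamma(\epsilon)$.
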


\begin{restatable}{lemma}{lemmapertb}\label{lem:dual_active_constraints}
	For every $r_\ell(\epsilon) \in R_\ell(\epsilon)$, $I \in \mathcal{I}_f$, and $a \in A(I)$, if Constraint~\eqref{eq:constraints_rewritten} holds with equality in an optimal solution to Problem~$\mathcal{D}(\epsilon)$, then
	\begin{align}\label{eq:active_cons_condition}
		\max_{\substack{\hat{r}_f \in R_f(a)}} g_{f, I}(r_\ell(\epsilon), \hat{r}_f) = \max_{\substack{\hat{r}_f \in R_f(I)}} g_{f, I}(r_\ell(\epsilon), \hat{r}_f).
	\end{align}
\end{restatable}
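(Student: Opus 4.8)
The plan is to combine LP complementary slackness with the structural meaning of the dual variables established in Lemma~\ref{lem:optimal_dual}. First I would fix an optimal primal--dual pair $(\tilde r_f^\ast, v_f^\ast)$ for $(\mathcal P(\epsilon),\mathcal D(\epsilon))$, where $r_f^\ast(\epsilon)=\tilde r_f^\ast+\xi_f(\epsilon)$ is the corresponding follower best response (Lemma~\ref{lem:ne_pert_game}). Suppose Constraint~\eqref{eq:constraints_rewritten} holds with equality at $I\in\mathcal I_f$ and $a\in A(I)$. By Lemma~\ref{lem:optimal_dual}, the right-hand side of \eqref{eq:constraints_rewritten} then equals
$$
\sum_{\sigma_f=\sigma_f(I)a} u_f(\sigma)\,r_\ell(\epsilon,\sigma_\ell)\;+\;\sum_{\hat I:\,\sigma_f(\hat I)=\sigma_f(I)a} \max_{\hat r_f\in R_f(\hat I)} g_{f,\hat I}(r_\ell(\epsilon),\hat r_f),
$$
while the left-hand side is $v_{f,I}^\ast=\max_{\hat r_f\in R_f(I)} g_{f,I}(r_\ell(\epsilon),\hat r_f)$.

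The key step is to recognize the right-hand side as exactly $\max_{\hat r_f\in R_f(a)} g_{f,I}(r_\ell(\epsilon),\hat r_f)$. Indeed, by perfect recall the contribution to $g_{f,I}$ from sequences passing through $I$ decomposes additively: the sequences $\sigma_f(I)a$ contribute the immediate-payoff term $\sum_{\sigma_f=\sigma_f(I)a} u_f(\sigma)r_\ell(\epsilon,\sigma_\ell)$, and the remaining contribution is a sum, over the information sets $\hat I$ immediately following the action $a$ at $I$, of the subgame values $g_{f,\hat I}$. Since a realization plan in $R_f(a)$ is free to choose its behavior independently in each such subgame below $a$ (subject only to the perturbation lower bounds, which are already baked into $R_f$), the maximum over $R_f(a)$ factorizes into the immediate term plus the sum of the per-subgame maxima $\max_{\hat r_f\in R_f(\hat I)} g_{f,\hat I}(r_\ell(\epsilon),\hat r_f)$. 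This is precisely the displayed right-hand side. Hence the assumed equality reads $\max_{\hat r_f\in R_f(I)} g_{f,I}(r_\ell(\epsilon),\hat r_f)=\max_{\hat r_f\in R_f(a)} g_{f,I}(r_\ell(\epsilon),\hat r_f)$, which is \eqref{eq:active_cons_condition}.

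I expect the main obstacle to be the bookkeeping in the factorization argument: one has to make the additive decomposition of $g_{f,I}$ over the immediate action and the subsequent information sets fully rigorous, and check that maximizing over $R_f(a)$ really does decouple across the child subgames despite the perturbation constraints. The perturbation is harmless here because $\xi_f(\epsilon)$ only imposes per-sequence lower bounds, which are separable across subgames, so feasibility in each subgame is unaffected by choices elsewhere; still, this should be stated carefully, perhaps by an induction on the $\succeq$ order on $\mathcal I_f$ mirroring the one used in Lemma~\ref{lem:optimal_dual}. A small additional point worth noting is that $R_f(a)$ and $R_f(\hat I)$ are nonempty (guaranteed by the standing assumption that $\Gamma(\epsilon)$ is well-defined), so all the maxima above are attained and finite.
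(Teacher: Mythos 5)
Your proposal is correct and follows essentially the same route as the paper: it uses the equality hypothesis together with Lemma~\ref{lem:optimal_dual} (applied both to $I$ and to the information sets $\hat{I}$ immediately following $\sigma_f(I)a$) and the additive decomposition of $\max_{\hat{r}_f \in R_f(a)} g_{f,I}$ into the immediate-payoff term plus the per-child subgame maxima---the same decomposition already implicit in the inductive step of Lemma~\ref{lem:optimal_dual}. The only cosmetic difference is that you invoke a primal optimal solution and complementary slackness in the setup, which is not needed here (that machinery only enters later, in Theorem~\ref{thm:compl_slack}).
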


Now we are ready to prove a fundamental property of the follower's best responses in $\xi$-perturbed game instances $\Gamma(\epsilon)$.
Intuitively, in a perturbed game instance, the follower best responds playing sequence $\sigma(I_f)a$ with probability strictly greater than its lower-bound $\xi_f(\epsilon,\sigma_f(I)a)$ only if playing $a$ is optimal in the game following $I$.
%
%
Theorem~\ref{thm:compl_slack} formally expresses the idea that, in a perturbed game instance $\Gamma(\epsilon)$, when the follower decides how to best respond to a leader's commitment in a given information set, she does not take into account her future trembles, but only opponents' ones.
\begin{theorem}\label{thm:compl_slack}
	Given $r_\ell(\epsilon) \in R_\ell(\epsilon)$, $r_f(\epsilon) \in \mathsf{BR}_{\Gamma(\epsilon)}(r_\ell(\epsilon))$, $I \in \mathcal{I}_f$, and $a \in A(I)$,
if $r_f(\epsilon, \sigma_f(I) a) > \xi_f(\epsilon, \sigma_f(I) a)$, then
	$$\max_{\substack{\hat{r}_f \in R_f(a)}}  g_{f, I}(r_\ell(\epsilon) , \hat{r}_f) = \max_{\substack{\hat{r}_f \in R_f(I)}} g_{f, I}(r_\ell(\epsilon), \hat{r}_f).
	$$
\end{theorem}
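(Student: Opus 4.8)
The plan is to obtain the claim from LP duality applied to the follower's best‑response linear program, combined with the structural Lemma~\ref{lem:dual_active_constraints}. Concretely, I would use complementary slackness between an optimal solution of $\mathcal{P}(\epsilon)$ and an optimal solution of its dual $\mathcal{D}(\epsilon)$ to turn the hypothesis ``$r_f(\epsilon,\sigma_f(I)a)$ is strictly above its lower bound'' into ``the dual constraint \eqref{eq:constraints_rewritten} for $(I,a)$ is tight,'' and then quote Lemma~\ref{lem:dual_active_constraints} to conclude.

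First, since $r_f(\epsilon)\in\mathsf{BR}_{\Gamma(\epsilon)}(r_\ell(\epsilon))$, Lemma~\ref{lem:ne_pert_game} gives that the residual $\tilde r_f(\epsilon)=r_f(\epsilon)-\xi_f(\epsilon)$ is an optimal solution of $\mathcal{P}(\epsilon)$. The program $\mathcal{P}(\epsilon)$ is feasible (by the standing assumption that $\Gamma(\epsilon)$ is well-defined) and bounded (since $R_f(\epsilon)\subseteq R_f$ is compact), so by strong duality its dual $\mathcal{D}(\epsilon)$ from Proposition~\ref{prop:dual_br_pert} admits an optimal solution $v_f^\ast$, and complementary slackness holds for the pair $\bigl(\tilde r_f(\epsilon),v_f^\ast\bigr)$. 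The nonnegativity constraints $\tilde r_f\ge 0$ of $\mathcal{P}(\epsilon)$ are paired with the inequality constraints $F_f^T v_f\ge U_f^T r_\ell(\epsilon)$ of $\mathcal{D}(\epsilon)$; hence complementary slackness states that for every sequence $\sigma_f\in\Sigma_f$, if $\tilde r_f(\epsilon,\sigma_f)>0$ then the dual constraint indexed by $\sigma_f$ holds with equality in $v_f^\ast$.

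Next, I would specialize to $\sigma_f=\sigma_f(I)a$. The hypothesis $r_f(\epsilon,\sigma_f(I)a)>\xi_f(\epsilon,\sigma_f(I)a)$ is exactly $\tilde r_f(\epsilon,\sigma_f(I)a)>0$, so the dual constraint indexed by $\sigma_f(I)a$ is tight in $v_f^\ast$. By the Remark following Proposition~\ref{prop:dual_br_pert}, that constraint is precisely \eqref{eq:constraints_rewritten} written for the information set $I$ and the action $a$. Therefore Constraint~\eqref{eq:constraints_rewritten} holds with equality for $(I,a)$ in the optimal dual solution $v_f^\ast$, and Lemma~\ref{lem:dual_active_constraints} immediately yields \eqref{eq:active_cons_condition}, i.e. $\max_{\hat r_f\in R_f(a)}g_{f,I}(r_\ell(\epsilon),\hat r_f)=\max_{\hat r_f\in R_f(I)}g_{f,I}(r_\ell(\epsilon),\hat r_f)$, which is the statement of the theorem.

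I expect the only delicate point to be the bookkeeping that matches the primal variable $\tilde r_f(\epsilon,\sigma_f(I)a)$ with the correct row of $F_f^T v_f\ge U_f^T r_\ell(\epsilon)$, and that identifies that row with the rewritten inequality \eqref{eq:constraints_rewritten} for $(I,a)$ --- but this identification is exactly what the Remark provides, so it can simply be cited. Everything else is a direct application of strong LP duality and the two already-established lemmas; no induction or case analysis is needed beyond what is internal to Lemma~\ref{lem:dual_active_constraints}.
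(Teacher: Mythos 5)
Your proposal is correct and follows essentially the same route as the paper's own proof: Lemma~\ref{lem:ne_pert_game} to reduce to optimality of the residual in $\mathcal{P}(\epsilon)$, complementary slackness with $\mathcal{D}(\epsilon)$ to get tightness of Constraint~\eqref{eq:constraints_rewritten} for $(I,a)$, and Lemma~\ref{lem:dual_active_constraints} to conclude. The extra remarks you add on feasibility and boundedness to justify strong duality are fine and only make the argument slightly more explicit than the paper's version.
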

\begin{proof}
By Lemma~\ref{lem:ne_pert_game}, $r_f(\epsilon) \in \mathsf{BR}_{\Gamma(\epsilon)}(r_\ell(\epsilon))$ if and only if $\tilde{r}_f(\epsilon) = r_f(\epsilon) - \xi_f(\epsilon)$ is optimal for Problem~$\mathcal{P}(\epsilon)$.
	%
	%
	%
	By applying the complementarity slackness theorem to Problems~$\mathcal{P}(\epsilon)$~and~$\mathcal{D}(\epsilon)$ we have that, if $\tilde{r}_f(\epsilon)$ and $v_f^\ast \in \mathbb{R}^{|\mathcal{I}_f|+1}$ are optimal, then,
	whenever $\tilde{r}_f(\epsilon, \sigma_f(I)a) > 0$, that is, $r_f(\epsilon, \sigma_f(I) a) > \xi_f(\epsilon, \sigma_f(I) a)$, Constraint~\eqref{eq:constraints_rewritten} for information set $I$ and action $a$ must hold with equality, which, by Lemma~\ref{lem:dual_active_constraints}, yields Equation~\eqref{eq:active_cons_condition}.
	%
\end{proof}

\section{Limits of SEs in $\xi$-Perturbed Games are QPSEs of the Unperturbed Games}\label{sec:limits_sse}

Here, we prove Theorem~\ref{thm:limit_se}.
%
%
%
%
First, we introduce two lemmas.

The first provides a characterization of $I$-best responses in terms of sequence form.
Intuitively, a follower's strategy $\pi_f$ is an $I$-best response to $\pi_\ell$ if and only if it places positive probability only on actions $a \in A(I)$ that are part of some best response of the follower below information set $I$.
\begin{restatable}{lemma}{lemmabr}\label{lem:lemma_br_I}
	Given an SEFG $\Gamma$, a completely mixed $\pi_\ell \in \Pi_\ell$ and $I \in \mathcal{I}_f$, $\pi_f \in \mathsf{BR}_{I}(\pi_\ell)$ if for every $a \in A(I)$:
	$$
	\pi_{i a} > 0 \implies \hspace{-0.3cm}\max_{\substack{\hat{r}_f \in R_f(a)}} \hspace{-0.1cm} g_{f, I}(r_\ell, \hat{r}_f) = \hspace{-0.3cm}\max_{\substack{\hat{r}_f \in R_f(I)}}  g_{f, I}(r_\ell, \hat{r}_f),
	$$
	where $r_\ell \in R_\ell$ is equivalent to $\pi_\ell$.
\end{restatable}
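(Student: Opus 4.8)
The plan is to connect the behavioral-strategy definition of $I$-best response (Definition~\ref{def:info_set_br}) with the sequence-form quantities $g_{f,I}$ and the sets $R_f(a) \subseteq R_f(I)$, and then to invoke Lemma~\ref{lem:optimal_dual} and Lemma~\ref{lem:dual_active_constraints} — or more directly the complementarity argument already packaged in Theorem~\ref{thm:compl_slack} — to close the gap. First I would unpack Definition~\ref{def:info_set_br}: $\pi_f \in \mathsf{BR}_I(\pi_\ell)$ means that, among all strategies $\hat\pi_f$ that agree with $\pi_f$ at $I$ itself but are otherwise free below $I$, the best achievable conditional utility $u_{f,I}(\pi_\ell, \pi_f\big/_I \hat\pi_f)$ equals the unconstrained optimum over all $\hat\pi_f$. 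Because $\pi_\ell$ is completely mixed, $I$ is reached with positive probability and the conditioning in $u_{f,I}$ is well defined; moreover the conditional expected utility decomposes, via realization equivalence, as a nonnegative combination over $a \in A(I)$ of the form $\sum_{a \in A(I)} \pi_{fa}\cdot (\text{best continuation value after playing } a)$, up to the normalizing reach probability. Here the ``best continuation value after playing $a$'' is exactly $\max_{\hat r_f \in R_f(a)} g_{f,I}(r_\ell,\hat r_f)$ suitably normalized, and the unconstrained optimum is $\max_{\hat r_f \in R_f(I)} g_{f,I}(r_\ell,\hat r_f)$.

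Given that decomposition, the equality in Definition~\ref{def:info_set_br} holds if and only if $\pi_f$ puts positive probability only on actions $a$ whose continuation value attains the maximum over $A(I)$ — which is precisely the right-hand condition $\max_{\hat r_f \in R_f(a)} g_{f,I}(r_\ell,\hat r_f) = \max_{\hat r_f \in R_f(I)} g_{f,I}(r_\ell,\hat r_f)$. So the concrete steps are: (1) express $u_{f,I}(\pi_\ell, \pi_f\big/_I\hat\pi_f)$ in sequence form, isolating the sum over $a \in A(I)$ weighted by $\pi_{fa}$; (2) observe that once $a$ is fixed, optimizing over the continuation $\hat\pi_f$ below $I$ corresponds to maximizing $g_{f,I}(r_\ell,\cdot)$ over $R_f(a)$, since $\hat\pi_f$ is unconstrained there; (3) conclude that the constrained optimum equals $\sum_a \pi_{fa}\,M_a$ where $M_a := \max_{\hat r_f\in R_f(a)} g_{f,I}(r_\ell,\hat r_f)$, while the unconstrained optimum equals $\max_a M_a$; (4) note $\sum_a \pi_{fa} M_a = \max_a M_a$ iff $\pi_{fa}>0 \Rightarrow M_a = \max_a M_a$, i.e. iff $M_a = \max_{\hat r_f\in R_f(I)} g_{f,I}(r_\ell,\hat r_f)$, which is the stated implication. (The lemma as stated is only the ``if'' direction, so strictly I only need steps (1)--(3) plus the easy direction of (4).)

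The main obstacle will be step (1)/(2): carefully justifying the decomposition of the conditional expected utility along $A(I)$ and the claim that optimizing the continuation independently after each $a$ is legitimate. Two subtleties require care. First, $\pi_f\big/_I\hat\pi_f$ only replaces $\pi_f$ at information sets $\hat I$ with $I \succeq \hat I$, so I must check that the utility contributions from terminal nodes below $I$ depend only on these replaced components together with $\pi_{fa}$ (which is held fixed), and that contributions from outside the subtree below $I$ cancel out in the conditional $u_{f,I}$ — this is where perfect recall and the completely-mixed assumption on $\pi_\ell$ are used. Second, because we are in behavioral strategies rather than realization plans, I should pass to the realization-equivalent $r_\ell$ and use the fact that, for the follower's own continuation, behavioral strategies and realization plans in $R_f(a)$ are equally expressive, so the per-action subproblems genuinely decouple and their optima can be chosen simultaneously. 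Once this bookkeeping is done, the equivalence between the Definition~\ref{def:info_set_br} equality and the action-support condition is immediate from the elementary fact about convex combinations hitting their maximum.
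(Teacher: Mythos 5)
Your proposal is correct and follows essentially the same route as the paper's proof: decompose the conditional utility below $I$ into a convex combination $\sum_{a\in A(I)}\pi_{fa}M_a$ of per-action continuation values, identify each $M_a$ (after normalizing by the reach probability, using realization equivalence and the completely mixed $\pi_\ell$) with $\max_{\hat r_f\in R_f(a)}g_{f,I}(r_\ell,\hat r_f)$, and conclude via the elementary fact that a convex combination supported on maximizers attains the maximum. Your opening suggestion to invoke Lemma~\ref{lem:optimal_dual}, Lemma~\ref{lem:dual_active_constraints}, or Theorem~\ref{thm:compl_slack} is a harmless red herring --- those concern the perturbed game $\Gamma(\epsilon)$ and play no role here --- and your concrete steps (1)--(4) correctly do not rely on them.
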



%
%
The next lemma shows that any limit point of a sequence of follower's best responses in $\xi$-perturbed games is a follower's best response at every information set in $\Gamma$.
\begin{restatable}{lemma}{lemfive}\label{lem:limit_follower_response}
	Given a $\xi$-perturbed SEFG $(\Gamma, \xi_\ell, \xi_f)$, let $\{ \epsilon_k \}_{k \in \mathbb{N}} \rightarrow 0$ and let $\{ (r_\ell(\epsilon_k), r_f(\epsilon_k)) \}_{k \in \mathbb{N}}$ be a sequence of realization plans in $\Gamma(\epsilon_k)$ with $r_f(\epsilon_k) \in \mathsf{BR}_{\Gamma(\epsilon_k)} (r_\ell(\epsilon_k))$.
	Then, any limit point $(\pi_\ell, \pi_f)$ of $\{ (\pi_{\ell, k}, \pi_{f, k}) \}_{k \in \mathbb{N}}$ is such that, eventually, $\pi_f \in \mathsf{BR}_{I_f}(\pi_{\ell,k})$ for all $I \in \mathcal{I}_f$,
	where $(\pi_{\ell, k}, \pi_{f, k})$ are equivalent to $(r_\ell(\epsilon_k), r_f(\epsilon_k))$ for all $k \in \mathbb{N}$.
\end{restatable}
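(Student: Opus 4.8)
The plan is to reduce the claim, via the sequence-form characterization of $I$-best responses in Lemma~\ref{lem:lemma_br_I}, to the complementary-slackness statement of Theorem~\ref{thm:compl_slack}, the only genuinely analytic ingredient being property~(iii) of Definition~\ref{def:qp_pert}. First I would pass to a subsequence along which $(\pi_{\ell,k},\pi_{f,k})\to(\pi_\ell,\pi_f)$ and relabel it as the whole sequence. Each $\pi_{\ell,k}$ is completely mixed: since $r_\ell(\epsilon_k)\in R_\ell(\epsilon_k)$ we have $r_\ell(\epsilon_k,\sigma_\ell)\ge\xi_\ell(\epsilon_k,\sigma_\ell)>0$ for every $\sigma_\ell\in\Sigma_\ell$, so every action at every leader information set gets positive probability; hence $\mathsf{BR}_I(\pi_{\ell,k})$ is well defined (Definition~\ref{def:info_set_br}) and Lemma~\ref{lem:lemma_br_I} applies with $\pi_\ell$ replaced by $\pi_{\ell,k}$. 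By that lemma, fixing $I\in\mathcal{I}_f$, it is enough to show that for every $a\in A(I)$ with $\pi_{fa}>0$ one has, for all large $k$,
$$
\max_{\hat r_f\in R_f(a)}g_{f,I}(r_\ell(\epsilon_k),\hat r_f)=\max_{\hat r_f\in R_f(I)}g_{f,I}(r_\ell(\epsilon_k),\hat r_f).
$$

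By Theorem~\ref{thm:compl_slack} (applicable since $r_f(\epsilon_k)\in\mathsf{BR}_{\Gamma(\epsilon_k)}(r_\ell(\epsilon_k))$), this equality holds as soon as $r_f(\epsilon_k,\sigma_f(I)a)>\xi_f(\epsilon_k,\sigma_f(I)a)$, so the crux is to verify that strict inequality for all large $k$. Set $\rho_k:=r_f(\epsilon_k,\sigma_f(I)a)/r_f(\epsilon_k,\sigma_f(I))$; this is well defined because $r_f(\epsilon_k,\sigma_f(I))\ge\xi_f(\epsilon_k,\sigma_f(I))>0$ (and $r_f(\epsilon_k,\sigma_\emptyset)=1$ in the degenerate case $\sigma_f(I)=\sigma_\emptyset$), and $\rho_k$ equals the probability the equivalent behavioral strategy $\pi_{f,k}$ assigns to $a$ at $I$, so $\rho_k\to\pi_{fa}>0$. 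Then
$$
\frac{r_f(\epsilon_k,\sigma_f(I)a)}{\xi_f(\epsilon_k,\sigma_f(I)a)}=\rho_k\,\frac{r_f(\epsilon_k,\sigma_f(I))}{\xi_f(\epsilon_k,\sigma_f(I)a)}\ge\rho_k\,\frac{\xi_f(\epsilon_k,\sigma_f(I))}{\xi_f(\epsilon_k,\sigma_f(I)a)},
$$
and property~(iii) of Definition~\ref{def:qp_pert}, applied with $i=f$ to this very pair $(I,a)$, gives $\xi_f(\epsilon_k,\sigma_f(I)a)/\xi_f(\epsilon_k,\sigma_f(I))\to 0$, so the right-hand side diverges to $+\infty$ and exceeds $1$ for all $k$ large enough; that is, $r_f(\epsilon_k,\sigma_f(I)a)>\xi_f(\epsilon_k,\sigma_f(I)a)$ eventually.

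Finally I would collect thresholds: for each $I\in\mathcal{I}_f$ and each $a\in A(I)$ with $\pi_{fa}>0$ there is an index past which the argument above applies, and since $\mathcal{I}_f$ and every $A(I)$ are finite, the maximum $K$ of all these (finitely many) indices satisfies $\pi_f\in\mathsf{BR}_I(\pi_{\ell,k})$ for every $k\ge K$ and every $I\in\mathcal{I}_f$, which is the assertion. I expect the main obstacle to be exactly the strict-inequality step: $\pi_{fa}>0$ only bounds $\rho_k$ away from $0$ asymptotically, and one really needs property~(iii) --- not merely (ii) --- so that the perturbation floor at the child sequence $\sigma_f(I)a$ is negligible relative to the (possibly itself vanishing) realization probability at the parent $\sigma_f(I)$; Observation~\ref{obv:obs_pert} confirms this cannot be dropped.
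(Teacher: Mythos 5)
Your proposal is correct and follows essentially the same route as the paper's proof: reduce via Lemma~\ref{lem:lemma_br_I} to the equality of maxima, obtain that from Theorem~\ref{thm:compl_slack}, and establish the needed eventual strict inequality $r_f(\epsilon_k,\sigma_f(I)a)>\xi_f(\epsilon_k,\sigma_f(I)a)$ from property~(iii) of Definition~\ref{def:qp_pert}. The only difference is one of detail: the paper dispatches that last step in one sentence (the contrapositive ``otherwise $\pi_{fa}=0$''), whereas you spell out the ratio argument explicitly, which is a faithful elaboration rather than a different proof.
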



Finally, we can prove Theorem~\ref{thm:limit_se}.

\begin{proof}[Proof of Theorem~\ref{thm:limit_se}]
First, since $r_f(\epsilon_k) \in \mathsf{BR}_{\Gamma(\epsilon_k)} (r_\ell(\epsilon_k))$ for all $k \in \mathbb{N}$, Lemma~\ref{lem:limit_follower_response} allows us to conclude that requirement (i) in Definition~\ref{def:qpse} holds.
Therefore, in order to prove Theorem~\ref{thm:limit_se}, we need to show that requirement (ii) holds as well.
For contradiction, suppose that point (ii) does not hold, that is, no matter how we choose sequences $\{ \pi_{i,k} \}_{k \in \mathbb{N}}$, for $i \in \mathcal{N}$ and $\pi_i \in \Pi_i$, there is an information set ${I} \in \mathcal{I}_\ell \cup \{ I_\emptyset \}$ and a leader's strategy $\hat{\pi}_\ell \in \Pi_\ell$ such that, for every $ \hat{\pi}_f \in \Pi_f : \hat{\pi}_f \in \mathsf{BR}_{\hat{I}}(\pi_{\ell,k} \big/_{I} \hat{\pi}_{\ell,k}) $ for all $ \hat{I} \in \mathcal{I}_f$, we have
$
u_{\ell} (\pi_{\ell,k} \big/_{I} \pi_\ell, \pi_{f,k} ) < u_{\ell} (\pi_{\ell,k} \big/_{I} \hat{\pi}_\ell, \hat{\pi}_{f,k} )
$.
By continuity, there must exist $\bar k \in \mathbb{N}$ such that, for all $k \in \mathbb{N}: k \geq \bar k$,
$
u_{\ell} (\pi_{\ell,k} \big/_{I} \pi_{\ell,k}, \pi_{f,k} )= u_{\ell} (\pi_{\ell,k}, \pi_{f,k} )  < u_{\ell} (\pi_{\ell,k} \big/_{I} \hat{\pi}_{\ell,k}, \hat{\pi}_{f,k} ).
$
Let sequence $\{ \hat{\pi}_{\ell, k} \}_{k \in \mathbb{N}}$ be such that $\hat{r}_\ell(\epsilon_k) \in R_\ell(\epsilon_k)$ for all $k \in \mathbb{N}$, where each realization plan $\hat{r}_\ell(\epsilon_k)$ is equivalent to the strategy $\pi_{\ell,k} \big/_{I} \hat{\pi}_{\ell,k}$.
This is always possible since requirement (iii) in Definition~\ref{def:qp_pert} is satisfied.
Consider a sequence $\{ (\hat{r}_\ell(\epsilon_k),  \hat{r}_f(\epsilon_k) \}_{k \in \mathbb{N}}$ with $\hat{r}_f(\epsilon_k) \in \mathsf{BR}_{\Gamma(\epsilon_k)}(\hat{r}_\ell(\epsilon_k))$, and let $\{  (\pi_{\ell,k} \big/_{I} \hat{\pi}_{\ell,k}, \hat{\pi}_{f, k}) \}_{k \in \mathbb{N}}$ be a sequence such that each strategy $\hat{\pi}_{f,k}$ is equivalent to $\hat{r}_f(\epsilon_k)$.
By Lemma~\ref{lem:limit_follower_response}, any limit point $(\pi_\ell \big/_{I} \hat{\pi}_\ell, \hat{\pi}_f)$ of $\{  (\pi_{\ell,k} \big/_{I} \hat{\pi}_{\ell,k}, \hat{\pi}_{f, k}) \}_{k \in \mathbb{N}}$ satisfies $\hat{\pi}_f \in \mathsf{BR}_{\hat{I}}(\pi_{\ell,k} \big/_{I} \hat{\pi}_{\ell,k}) $ for all $ \hat{I} \in \mathcal{I}_f$.
Thus, using the equivalence between strategies and realization plans, for $k \in \mathbb{N} : k \geq \bar k$ we have that $u_\ell(r_\ell(\epsilon_k), r_f(\epsilon_k)) < u_\ell(\hat{r}_\ell(\epsilon_k), \hat{r}_f(\epsilon_k))$, no matter how we choose $\hat{r}_f(\epsilon_k) \in \mathsf{BR}_{\Gamma(\epsilon_k)}(\hat{r}_\ell(\epsilon_k))$.
This contradicts the fact that $(r_\ell(\epsilon_k), r_f(\epsilon_k))$ is an SE of $\Gamma(\epsilon_k)$.
\end{proof}
%

\section{Algorithm}\label{sec:sefce}

One can use our perturbation scheme to compute an (approximate) QPSE. We
do this by developing an LP for computing an SEFCE in a given $\xi$-perturbed game instance, where we maximize the leader's value. We
then conduct a \emph{branch-and-bound} search on this SEFCE LP. It branches on
which actions to \emph{force} be recommended to the follower (by the correlation device of the SEFCE). The idea is
that, as long as we only recommend a single action to the follower at any given
information set, we get an SE of the perturbed game (specifically an SSE, and an SSE has maximum value among all SEs), and, thus, according to Theorem~\ref{thm:limit_se}, a QPSE (specifically QPSSE) if we take
the limit point of the perturbations. As in prior papers on EFCE
computation in general-sum games, we focus on games without chance nodes~\citep{Stengel08:Extensive,Cermak16:Using}.

For computing an SEFCE we need to specify joint probabilities over sequence
pairs $(\sigma_\ell,\sigma_f)\in \Sigma$. However, not all pairs
need to specify probabilities, only pairs such that choosing $\sigma_f$ is
affected by the probability put on $\sigma_\ell$ (we do not need to care
about the converse of this, as only the follower needs to be induced to follow
the recommended strategy). Intuitively, the set of the leader's sequences relevant to a
given $\sigma_f \in \Sigma_f$ is made of those sequences that affect the expected value
of $\sigma_f$ or any alternative sequence $\hat{\sigma}_f \in \Sigma_f$ whose last action is available at $I_f(\sigma_f)$.
\begin{definition}[Relevant sequences]
  A pair $(\sigma_\ell,\sigma_f) \in \Sigma$ is \emph{relevant} if either $\sigma_\ell = \sigma_\emptyset$ or there exists
  $h,\hat{h} \in \mathcal{H}$ s.t. $\hat{h}$ precedes $h$, 
  $h \in I_f(\sigma_f)$, and $\hat{h} \in I_\ell(\sigma_\ell)$, or if the condition
  holds with the roles of $\sigma_\ell$ and $\sigma_f$ reversed.
\end{definition}
For every information set $I\in \mathcal{I}_i$, we let $rel(I)$ be the set of sequences relevant to
each child sequence $\sigma_i(I)a$ for $a\in A(I)$.
We let $p(\sigma_\ell,\sigma_f)$ be the probability that we recommend that the
leader plays sequence $\sigma_\ell$, and that the follower sends her
\emph{residual} (\emph{i.e.}, the probability that is not fixed by the perturbation) to $\sigma_f$.
Moreover, we let $\eta(\sigma_f)$ be the maximum probability that the follower can put on
a sequence $\sigma_f \in \Sigma_f$ given the $\xi_f$-perturbation scheme.

First, we introduce a new value function representing the value to the leader of
the sequence pair $(\sigma_\ell,\sigma_f) \in \Sigma$ given that $\sigma_f$ represents an
assignment of residual probability:
\[
  u_\ell^{\epsilon}(\sigma_\ell,\sigma_f) = \sum_{\mathclap{\substack{h\in \leafnodes:
      \sigma_\ell(h) = \sigma_\ell \wedge \sigma_f(h) = \sigma_f}}} \eta(\sigma_f)u_\ell(h)
  + \sum_{\hat{\sigma}_f \in \Sigma_f} \xi_f(\epsilon, \hat{\sigma}_f) u_\ell(\sigma_\ell, \hat{\sigma}_f).
\]


The following LP finds an SEFCE in a $\xi$-perturbed SEFG.
\fontsize{8.5pt}{1pt}\selectfont
\begin{subequations}
\begin{align}
\max_{p,v}  & \quad 
                 \sum_{(\sigma_\ell,\sigma_f)\in \Sigma}  p(\sigma_\ell, \sigma_f)
                 u_\ell^{\epsilon}(\sigma_\ell,\sigma_f) \qquad \textrm{s.t.}
                 \\
&
                p(\emptyset, \emptyset)  =  1,  
\hspace{0.4cm}                p(\sigma_\ell,\sigma_f)     \geq 0 \hspace{1.7cm} \forall( \sigma_\ell, \sigma_f) \in \Sigma \\ 
&  \sum_{\sigma_f \in rel(\sigma_\ell)}p(\sigma_\ell,\sigma_f)     \geq \xi_\ell(\epsilon, \sigma_\ell) \hspace{1.91cm} \forall \sigma_\ell \in \Sigma_\ell \label{eq:xi_constraint}\\ 
&               p(\sigma_\ell(I),\sigma_f) = \sum_{\mathclap{a \in A(I)}} p(\sigma_\ell(I)a,\sigma_f) \hspace{0.31cm} \forall I\in \cI_\ell, \sigma_f \in rel(I) \label{eq:sefce_lp_leader_sequence_constraint}\\ 
&                p(\sigma_\ell,\sigma_f(I)) = \sum_{\mathclap{a \in A(I)}} p(\sigma_\ell,\sigma_f(I)a) \hspace{0.32cm} \forall I\in \cI_f, \sigma_\ell \in rel(I) \label{eq:sefce_lp_follower_sequence_constraint}\\
& v(\sigma_f) = \eta(\sigma_f) \sum_{\sigma_\ell \in rel(\sigma_f)} p(\sigma_\ell,\sigma_f) u_f(\sigma_\ell, \sigma_f) \label{eq:sefce_lp_incentive_sequence} \\
  & \hspace{1cm}+ \sum_{I\in \cI_f: \sigma_f(I)=\sigma_f}\sum_{a \in A(I)} v(\sigma_f a) \hspace{1.4cm}  \forall \sigma_f \in \Sigma_f \nonumber\\ 
  &                v(I,\sigma_f) \geq \eta(\sigma_f(I)a) \hspace{-2mm} \sum_{\sigma_\ell \in rel(\sigma_f)} p(\sigma_\ell,\sigma_f) u_f(\sigma_\ell, \sigma_f(I)a) \label{eq:sefce_lp_incentive_geq}\\
  &\hspace{0.8cm} + \,\, \sum_{\mathclap{\hat{I} \in \cI_f;\sigma_f(\hat{I})=\sigma_f(I)a}} v(\hat{I},\sigma_f) \hspace{0.9cm} 
    \forall I\in \cI_f, a \in A(I), \sigma_f \in prec(I)  \nonumber\\ 
&               v(\sigma_f(I)a) =  v(I,\sigma_f(I)a)  \hspace{1.45cm} \forall I\in \cI_f, a\in A(I).\label{eq:sefce_lp_incentive_optimality} 
\end{align}
\label{eq:eps sefce lp}
\end{subequations}
\normalsize%
In \eqref{eq:sefce_lp_incentive_geq} of this LP, $prec(I)$, where $I \in \mathcal{I}_f$, is the set of follower's
sequences $\sigma_f$ that precede $I$ in the sense that there is $\hat{I} \in \mathcal{I}_f$ with $\sigma_f(\hat{I})\sqsubseteq
\sigma_f(I)$ and $\sigma_f=\sigma_f(\hat{I})a$ for some $a\in A(\hat{I})$.
This LP is a modification of the SEFCE LP given
by~\citet{Cermak16:Using}. The new LP has two modifications to allow perturbation. First, it has constraints \eqref{eq:xi_constraint} to ensure that the sum of recommendation probabilities on any
leader's sequence is at least $\xi_\ell(\epsilon, \sigma_\ell)$. Second, because we are now
recommending where to send residual probability for the follower, we must modify
the objective in order to give the correct expected value for the leader.\footnote{We use the definition of relevant sequences and the LP from \citet{Stengel08:Extensive} rather than those of \citet{Cermak16:Using}. The latter are not well defined for \eqref{eq:sefce_lp_leader_sequence_constraint} and \eqref{eq:sefce_lp_follower_sequence_constraint}.}

We can branch-and-bound on recommendations to the follower in a way that ensures that the final outcome is an SSE.
That is guaranteed by the following theorem, which shows that we can add and
remove constraints on which follower actions to
recommend in a way that guarantees
an SSE of the perturbed game as long as the follower is recommended a ``pure''
strategy with respect to the residual probabilities.
\begin{restatable}{theorem}{thmsefcepure}
  If a solution to LP~\eqref{eq:eps sefce lp} is such that for all $I \in \cI_f$
  there exists $a\in A(I)$ such that $p(\sigma_\ell,\sigma_f(I)\hat{a}) = 0$ for all $\hat{a}
  \in A(I), \sigma_\ell\in rel(\sigma_f(I)a)$ with $\hat{a}\ne a$, then a strategy
  profile can be extracted in polynomial time such that it is an SSE of the perturbed game instance.
  \label{thm:sefce_pure}
\end{restatable}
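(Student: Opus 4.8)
The plan is to show that, under the hypothesis, the LP solution $p$ decomposes into a leader realization plan $r_\ell(\epsilon)$ and a follower realization plan $r_f(\epsilon)$ that together form an SSE of the perturbed game $\Gamma(\epsilon)$; then invoke Theorem~\ref{thm:limit_se} (and its SSE counterpart, Theorem~\ref{thm:limit_sse} in the Appendix) to conclude it is a QPSSE in the limit. First I would extract the follower's strategy: by hypothesis, at every $I \in \cI_f$ there is a single action $a$ carrying nonzero residual recommendation probability across all relevant leader sequences, so define $\tilde r_f(\epsilon,\sigma_f) = \sum_{\sigma_\ell \in rel(\sigma_f)} p(\sigma_\ell,\sigma_f)$ and $r_f(\epsilon,\sigma_f) = \tilde r_f(\epsilon,\sigma_f) + \xi_f(\epsilon,\sigma_f)$; the follower-sequence constraints \eqref{eq:sefce_lp_follower_sequence_constraint} together with the ``pure'' assumption force $r_f(\epsilon) \in R_f(\epsilon)$ (it is a valid perturbed realization plan, putting mass only on one action per follower information set modulo the forced perturbation mass $\xi_f$). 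Symmetrically, define $r_\ell(\epsilon,\sigma_\ell) = \sum_{\sigma_f} p(\sigma_\ell,\sigma_f)$ restricted appropriately; the leader-sequence constraints \eqref{eq:sefce_lp_leader_sequence_constraint} and \eqref{eq:xi_constraint} make $r_\ell(\epsilon) \in R_\ell(\epsilon)$. I would also verify that $u_\ell(r_\ell(\epsilon), r_f(\epsilon))$ equals the LP objective value: this is exactly where the modified value function $u_\ell^\epsilon$ was designed to work, since $\eta(\sigma_f)$ and the $\xi_f(\epsilon,\cdot)$ correction terms in $u_\ell^\epsilon$ precisely account for the residual-probability bookkeeping.

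Next I would establish that $r_f(\epsilon) \in \mathsf{BR}_{\Gamma(\epsilon)}(r_\ell(\epsilon))$. The incentive constraints \eqref{eq:sefce_lp_incentive_sequence}--\eqref{eq:sefce_lp_incentive_optimality} are a primal encoding of the follower's best-response problem: $v(\sigma_f)$ plays the role of the value-to-go along the recommended sequence and \eqref{eq:sefce_lp_incentive_geq} forces it to dominate the value of any deviating action. Matching these against the dual $\mathcal{D}(\epsilon)$ from Proposition~\ref{prop:dual_br_pert} — in particular the rewritten form \eqref{eq:constraints_rewritten} — shows that the $v$ variables, suitably normalized, are dual-feasible for the follower's best-response LP against $r_\ell(\epsilon)$, and that \eqref{eq:sefce_lp_incentive_optimality} (equality along recommended sequences) is exactly complementary slackness with $r_f(\epsilon)$. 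Hence by Lemma~\ref{lem:ne_pert_game} and LP duality, $\tilde r_f(\epsilon)$ is optimal for $\mathcal{P}(\epsilon)$, i.e. $r_f(\epsilon)$ is a follower best response in $\Gamma(\epsilon)$.

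It then remains to show $(r_\ell(\epsilon), r_f(\epsilon))$ is an SSE of $\Gamma(\epsilon)$, not merely an SE. This follows from the maximality of the LP: for any alternative leader plan $\hat r_\ell(\epsilon) \in R_\ell(\epsilon)$ and any $\hat r_f(\epsilon) \in \mathsf{BR}_{\Gamma(\epsilon)}(\hat r_\ell(\epsilon))$, the pair $(\hat r_\ell(\epsilon), \hat r_f(\epsilon))$ induces a feasible point of the SEFCE LP (a correlation device that deterministically recommends $\hat\sigma_f$ to the follower is a special case of the LP's feasible region, and the incentive constraints are satisfied because $\hat r_f(\epsilon)$ is an actual best response), whose objective equals $u_\ell(\hat r_\ell(\epsilon), \hat r_f(\epsilon))$. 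Since our solution is LP-optimal, $u_\ell(r_\ell(\epsilon), r_f(\epsilon)) \geq u_\ell(\hat r_\ell(\epsilon), \hat r_f(\epsilon))$ for all such pairs, which is precisely Definition~\ref{def:sse}. All of these extractions are linear-algebraic and run in polynomial time. Finally, converting $(r_\ell(\epsilon), r_f(\epsilon))$ to behavioral strategies via the standard equivalence gives the SSE of the perturbed game instance claimed in the statement; combined with Theorem~\ref{thm:limit_se} this yields a QPSSE in the perturbation limit, which is the point of the branch-and-bound scheme.

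I expect the main obstacle to be the careful verification that the LP's incentive block \eqref{eq:sefce_lp_incentive_sequence}--\eqref{eq:sefce_lp_incentive_optimality}, which is written in the ``relevant sequence'' formulation of \citet{Stengel08:Extensive} with the extra $\eta$ and $\xi_f$ factors, really is equivalent to the primal--dual pair $\mathcal{P}(\epsilon)$/$\mathcal{D}(\epsilon)$ — in particular, matching indices correctly between the $v(I,\sigma_f)$ variables (indexed by information sets and relevant leader sequences) and the flat dual vector $v_f \in \mathbb{R}^{|\cI_f|+1}$, and checking that the ``pure residual'' hypothesis is exactly what is needed to make the correlation device collapse to a product distribution $r_\ell(\epsilon) \otimes r_f(\epsilon)$ rather than a genuinely correlated one. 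The rest is bookkeeping already supported by Lemmas~\ref{lem:ne_pert_game}--\ref{lem:dual_active_constraints} and Theorem~\ref{thm:compl_slack}.
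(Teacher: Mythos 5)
Your plan matches the paper's proof in all essentials: extract the leader and follower realization plans from the pure-residual solution, check via the incentive block \eqref{eq:sefce_lp_incentive_sequence}--\eqref{eq:sefce_lp_incentive_optimality} (with the $\eta(\sigma_f)$ and $\xi_f$ bookkeeping) that the recommended residual assignment is a follower best response in $\Gamma(\epsilon)$, and conclude SSE-optimality from the observation that every SSE is itself a feasible point of the LP, so the LP-optimal value dominates. The only cosmetic difference is that the paper extracts the leader's strategy as the conditional $p(\sigma_\ell(I)a,\sigma_f)$ for an arbitrary positive-probability $\sigma_f \in rel(\sigma_\ell(I))$ --- noting these conditionals coincide across $\sigma_f$ by \eqref{eq:sefce_lp_leader_sequence_constraint} and the purity hypothesis --- rather than by marginalizing over $\sigma_f$, but the resulting profile and argument are the same.
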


Now it is obvious that the LP~\eqref{eq:eps sefce lp} upper bounds the value of any
SSE since the SSE is a feasible solution to the LP.

Theorem~\ref{thm:sefce_pure} shows that one way to find an SSE is to find a
solution to LP~\eqref{eq:eps sefce lp} where the follower is recommended a pure
strategy with respect to the residual probabilities. Since any SSE represents such a
solution, we can branch on which actions we make pure at each information set,
and use branch-and-bound to prune the space of possible solutions. This approach
was proposed by \citet{Cermak16:Using} for computing SSEs in unperturbed games,
where they showed that it performs better than a single MIP. Because our LP for
perturbed games uses residual probabilities for the follower, we can apply the
 branching methodology of \citet{Cermak16:Using}. At each node in the
search we choose some information set $I$ where more than one action is
recommended. We then branch on which action in $A(I)$ to recommend. Forcing a
given action is accomplished by requiring all other action probabilities be zero.
Our branch-and-bound chooses information sets according to depth, always branching on the shallowest one with at least two recommended action. We explore actions in
descending order of mass, where the mass on $a\in A(I)$ (with sequence
$\sigma_f$) is $\sum_{\sigma_\ell \in rel(\sigma_f)}p(\sigma_\ell,\sigma_f)$.

The algorithm finds an SSE of the perturbed game. In the limit as the perturbation approaches zero, this yields a QPSE. No algorithm is currently known for computing such an exact limit. In practice, we pick a small perturbation and solve the branch-and-bound using that value. This immediately leads to an approximate notion of QPSE (akin to approximate refinement notions in non-Stackelberg EFGs~\cite{farina2017regret,kroer2017smoothing}).
Another approach is to use our algorithm as an anytime algorithm where one runs it repeatedly with smaller and smaller perturbation values.


\section{Experiments}

We conducted experiments with our algorithm on two common benchmark EFGs. The
first is a search game played on the graph shown in
Figure~\ref{fig:search game}. It is a simultaneous-move game (which can be
modeled as a turn-taking EFG with appropriately chosen information sets). The
leader controls two patrols that can each move within their respective shaded
areas (labeled P1 and P2), and at each time step the controller chooses a move
for both patrols. The follower is always at a single node on the graph,
initially the leftmost node labeled $S$ and can move freely to any adjacent node
(except at patrolled nodes, the follower cannot move from a patrolled node to
another patrolled node). The follower can also choose to wait in place for a
time step in order to clean up their traces. If a patrol visits a node that was
previously visited by the follower, and the follower did not wait to clean up
their traces, they can see that the follower was there. If the follower reaches
any of the rightmost nodes they received the respective payoff at the node ($5$
and $10$, respectively). If the follower and any patrol are on the same node at
any time step, the follower is captured, which leads to a payoff of $0$ for the
follower and a payoff of $1$ for the leader. Finally, the game times out after
$k$ simultaneous moves, in which case the leader receives payoff $0$ and the
follower receives $-\infty$ (because we are interested in games where the
follower attempts to reach an end node). This is the game considered by
\citet{Kroer18:Robust} except with the bottom layer removed, and is similar to
games considered by \citet{Bosansky14:Exact} and \citet{Bosansky15:Sequence}.
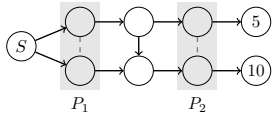
\begin{figure}[!h]
  \centering
  \scalebox{0.65}{
  \begin{tikzpicture}
\path[fill=black!10!white] (.8, -0.4) rectangle (1.6, 1.4);
\path[fill=black!10!white] (3.2, -0.4) rectangle (4.0, 1.4);
\node at (1.2, -0.7) {$P_1$};
\node at (3.6, -0.7) {$P_2$};

  \node[draw, circle, minimum width=.6cm, inner sep=0] (A) at (0, 0.5) {$S$};
  \node[draw, circle, minimum width=.6cm] (B) at (1.2, 1) {};
  \node[draw, circle, minimum width=.6cm] (C) at (1.2, 0) {};
    \node[draw, circle, minimum width=.6cm] (E) at (2.4, 1) {};
    \node[draw, circle, minimum width=.6cm] (F) at (2.4, 0) {};
      \node[draw, circle, minimum width=.6cm] (H) at (3.6, 1) {};
      \node[draw, circle, minimum width=.6cm] (I) at (3.6, 0) {};
            \node[draw, circle, minimum width=.6cm,inner sep=0] (K) at (4.8, 1) {$5$};
            \node[draw, circle, minimum width=.6cm,inner sep=0] (L) at (4.8, 0) {$10$};
            
\draw[thick,->] (A) edge (B);
\draw[thick,->] (A) edge (C);
\draw[thick,->] (B) edge (E);
\draw[thick,->] (C) edge (F);
\draw[thick,->] (E) edge (F);
\draw[thick,->] (E) edge (H);
\draw[thick,->] (F) edge (I);
\draw[thick,->] (H) edge (K);
\draw[thick,->] (I) edge (L);

\draw[thick,gray,dashed] (B) edge (C);

\draw[thick,gray,dashed] (H) edge (I);
\end{tikzpicture}
  }
  \caption{The graph on which the search game is played.}
  \label{fig:search game}
\end{figure}

The second game is a variant on Goofspiel~\citep{Ross71:Goofspiel}, a bidding
game where each player has a hand of cards numbered $1$ to $3$. There are $3$ prizes worth $1, \ldots, 3$.
In each turn, the prize is the smallest among the remaining prizes.
Within the turn, the each of two players simultaneously chooses some private card to play. The player with the larger card wins the prize. In case of a tie, the prize is discarded, so this is not a constant-sum game. The cards that were played get discarded.
Once all cards have been played, a player's score is the sum of the prizes that she has won.

The LP solver we used is GLPK 4.63~\cite{glpk4.63}. We had to make the following changes to GLPK.
First, we had to expose some internal routines so that we could input to the solver rational
numbers rather than double-precision numbers. Second, we fixed a glitch in GLPK's rational LP
solver in its pivoting step (it was not correct when the rational numbers were too small).
Our code and GLPK use the GNU GMP library to provide arbitrary-precision arithmetic. 
The code, written in the C++14 language, was compiled with the g++ 7.2.0 compiler. 
It was
run on a single thread on a 2.3 GHz Intel Xeon processor. The results are shown in Figure~\ref{fig:plot1}.
\begin{figure}[!ht]
  \centering
  \includegraphics[width=0.83\columnwidth]{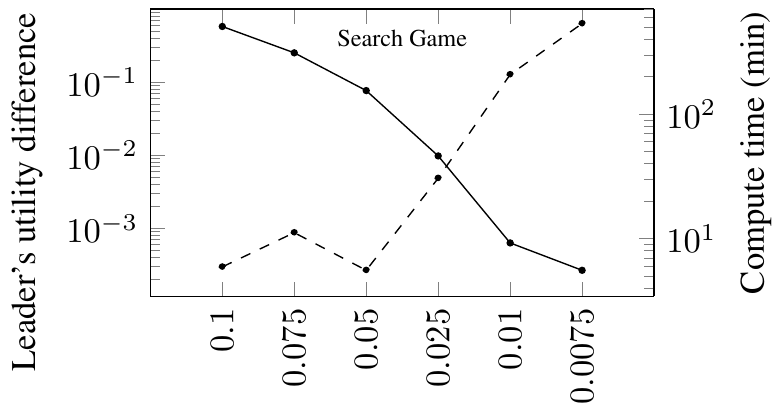}\\
  \includegraphics[width=0.83\columnwidth]{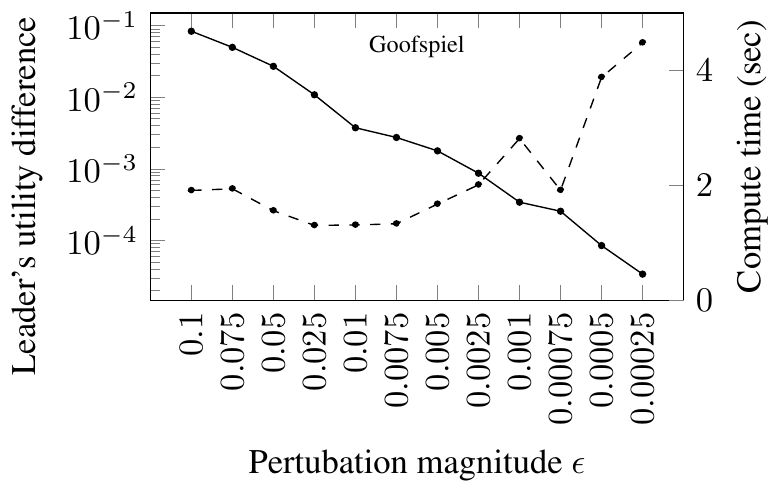}
  \caption{Experiments. Dashed lines show compute time. Solid lines show the loss in the leader's utility compared to the SSE value in the unperturbed game.}
  \label{fig:plot1}
\end{figure}


\section{Conclusions and Future Research}
  Quasi-perfect equilibrium has been studied in extensive-form games, but was poorly
  understood in Stackelberg settings.
  We provided a game-theoretic, axiomatic
  definition of \emph{quasi-perfect Stackelberg equilibrium (QPSE)}. We developed a family of
  game perturbation schemes that lead to a QPSE in the limit. Our
  family generalizes prior perturbation
  schemes introduced for finding (even non-Stackelberg) quasi-perfect
  equilibria. Using our perturbation schemes, we developed a branch-and-bound
  algorithm for QPSE. It leverages a perturbed variant of the linear program for computing a
  Stackelberg extensive-form correlated equilibrium. Experiments show that our algorithm can be used to find an approximate QPSE in games with thousands of nodes.

We showed that some perturbation schemes outside our family do not lead to QPSEs in some games.
It remains an open question whether our perturbation family fully characterizes the whole set of QPSEs.
As to requirement (i) in Definition~\ref{def:qp_pert}, can all QPSEs be captured by perturbation schemes that only use polynomial lower bounds on trembles?

It was recently shown that in non-Stackelberg extensive-form games, there exists a perturbation size that is small enough (while still strictly positive) that an exact refined (e.g., quasi-perfect) equilibrium can be found by solving a mathematical program with that perturbation size~\citep{miltersen2010computing,DBLP:conf/aaai/Farina017,Farina18:Practical}, and \citet{Farina18:Practical} provide an algorithm for checking whether a given guess of perturbation size is small enough. That obviates the need to try to explicitly compute a limit of a sequence. It would be interesting to see whether such theory can also be developed for Stackelberg extensive-form games---and for our perturbation family in particular. 

\bibliography{refs}
\bibliographystyle{named}

\clearpage

\section*{Appendix}

\subsection*{Omitted Proofs}

\lemmanepert*

\begin{proof}
	%
	%
	Since, $r_f(\epsilon) \in \mathsf{BR}_{\Gamma(\epsilon)}(r_\ell(\epsilon))$ if and only if
	$
	r_f(\epsilon) \in  \arg\max_{r_f: F_f r_f = f_f, r_f \geq \xi_f(\epsilon)} r_\ell(\epsilon)^T  U_f r_f
	$,
	%
	introducing variables $\tilde{r}_f = r_f - \xi_f(\epsilon)$ and dropping the constant term $r_\ell(\epsilon)^T U_i \xi_f(\epsilon) $ from the objective, we obtain that $r_f(\epsilon)$ must be an optimal solution to Problem~$\mathcal{P}(\epsilon)$.
	%
	%
	%
	%
\end{proof}

\lemmaperta*

\begin{proof}
	Let us consider Problem~$\mathcal{D}(\epsilon)$.
	First, observe that, for every information set $I \in \mathcal{I}_f$, the objective function coefficient for the variable $v_{f, I}$ is equal to $\xi_f(\epsilon,\sigma_f(I)) - \sum_{a \in A(I)} \xi_f(\epsilon,\sigma_f(I) a)$.
	Assuming $\Gamma(\epsilon)$ is well-defined, such coefficients are positive for every $v_{f, I}$.
	Then, in an optimal solution $v_f^\ast \in \mathbb{R}^{|\mathcal{I}_f|+1}$ to Problem~$\mathcal{D}(\epsilon)$, each variable $v_{f,I}$ is set to its minimum given Constraints~\eqref{eq:constraints_rewritten}.
	We prove Equation~\eqref{eq:inductive_argument} using a simple inductive argument.
	The base case of the induction is when there is no information set $\hat{I} \neq I \in \mathcal{I}_f$ with $I \succeq \hat{I}$.
	For every action $a \in A(I)$, $v_{f,I} \geq \sum_{\substack{\sigma \in \Sigma : \sigma_f = \sigma_f(I)a }} u_f(\sigma) r_\ell(\epsilon,\sigma_\ell)$, which, using the fact that $v_{f, I}^\ast$ must be set to its minimum possible value given the constraints, implies the following:
	\begin{align*}
	v_{f, I}^\ast &= \max_{a \in A(I)} \sum_{\substack{\sigma \in \Sigma : \sigma_f = \sigma_f(I) a }} u_f(\sigma) r_\ell(\epsilon,\sigma_\ell) = \\
	& = \max_{\substack{\hat{r}_f \in R_f(I)}} g_{f, I}( r_\ell(\epsilon), \hat{r}_f),
	\end{align*}
	where the last equality holds since $\sum_{a \in A(I)} \hat{r}_f(\sigma_f(I)a) = \hat{r}_f(\sigma_f(I)) =  1$, for the definition of realization plan.
	As for the inductive step, let us consider an information set $I \in \mathcal{I}_f$ and assume, by induction, that Equation~\eqref{eq:inductive_argument} holds for every information set $\hat{I} \neq I \in \mathcal{I}_f$ with $I \succeq \hat{I}$.
	We can write:
	\begin{align*}
	v_{f, I}^\ast &=  \max_{a \in A(I)} \hspace{-0.1cm} \sum_{\substack{\sigma \in \Sigma : \sigma_f =\sigma_f(I) a }} \hspace{-0.6cm} u_f(\sigma) r_\ell(\epsilon,\sigma_\ell) + \hspace{-0.8cm} \sum_{\substack{ \hat{I} \in \mathcal{I}_f: \sigma_f(\hat{I}) = \sigma_f(I)a}} \hspace{-0.8cm} v_{f,\hat{I}}^\ast  = \\
	& =  \max_{a \in A(I)} \sum_{\substack{\sigma \in \Sigma : \sigma_f = \sigma_f(I) a}}  u_f(\sigma) r_\ell(\epsilon,\sigma_\ell)  \quad+ \\
	& \quad\quad + \hspace{-0.5cm}  \sum_{\substack{ \hat{I} \in \mathcal{I}_f: \sigma_f(\hat{I}) = \sigma_f(I)a}} \max_{\substack{\hat{r}_f \in R_f(\hat{I})}}  g_{f, \hat{I}}(r_\ell(\epsilon), \hat{r}_f) = \\
	& = \max_{\substack{\hat{r}_f \in R_f(I)}} g_{f, I}(r_\ell(\epsilon), \hat{r}_f),
	\end{align*}
	where the first equality directly follows from the optimality of $v_f^\ast$, the second one from the inductive hypothesis, while the last equality holds since we have $\sum_{a \in A(I)} \hat{r}_f(\sigma_f(I)a) = \hat{r}_f(\sigma_f(I)) =  1$.
\end{proof}

\lemmapertb*

\begin{proof}
	Let $v^\ast \in \mathbb{R}^{|\mathcal{I}_i| + 1}$ be an optimal solution to Problem~$\mathcal{D}(\epsilon)$ that satisfies Constraint~\eqref{eq:constraints_rewritten}, for $I \in \mathcal{I}_f$ and $a \in A(I)$, with equality.
	We can write:
	\begin{align*}
	v_{f, I}^\ast &= \hspace{-0.3cm} \sum_{\substack{\sigma \in \Sigma : \sigma_f = \sigma_f(I) a}} \hspace{-0.5cm} u_f(\sigma) r_\ell(\epsilon,\sigma_\ell) + \hspace{-0.3cm} \sum_{\substack{ \hat{I}_f \in \mathcal{I}_f: \sigma_f(\hat{I}) = \sigma_f(I)a}} \hspace{-0.5cm} v_{f,\hat{I}}^\ast = \\
	& =\hspace{-0.3cm} \max_{\substack{\hat{r}_f \in R_f(a)}} g_{f,I}(r_\ell(\epsilon) , \hat{r}_f) = \hspace{-0.3cm}\max_{\substack{\hat{r}_f \in R_f(I)}} g_{f,I}(r_\ell(\epsilon), \hat{r}_f),
	\end{align*}
	where the second equality holds for the optimality of $v_f^\ast$ and the last one for Lemma~\ref{lem:optimal_dual}.
\end{proof}

\lemmabr*

\begin{proof}
	First, let us notice that, for every $I\in \mathcal{I}_f$ and $a \in A(I)$, the following relation holds:
	\begin{align}\label{eq:double_impl_agent_sequence}
	& \hspace{-0.2cm}  \max_{\substack{\hat{r}_f \in R_f(a)}}  g_{f, I}(r_\ell, \hat{r}_f) = \max_{\substack{\hat{r}_f \in R_f(I)}} g_{f,I}( r_\ell, \hat{r}_f ) \Longrightarrow \\
	&\hspace{-0.2cm} \max_{ \substack{\hat{\pi}_f \in \Pi_f : \hat{\pi}_{f a} = 1}} u_{f, I} \left( \pi_\ell, \pi_f \big/_{I} \hat{\pi}_f \right) = \max_{\hat{\pi}_f \in \Pi_f} u_{f, I} \left( \pi_\ell, \pi_f \big/_{I} \hat{\pi}_f \right)  \nonumber
	\end{align}
	In order to see this, for $I \in \mathcal{I}_f$, let $Z(I) \subseteq \mathcal{Z}$ be the set of terminal nodes that are potentially reachable from $I$, and, for $h \in Z(I)$ and $\hat{\pi}_f \in \Pi_f$, let $\mathcal{U}_{f,h}(\pi_\ell,  \hat{\pi}_f) = u_f(h) \prod_{a \in \sigma_\ell(h)} \pi_{\ell a}   \prod_{a \in \sigma_f(h) \setminus \sigma_f(I)  } \hat{\pi}_{fa} $.
	Given the realization equivalence of $r_\ell$ and $\pi_\ell$, and the fact that $\hat{r}_f(\sigma_f(I)) = 1$, the left-hand side in the first line of Equation~\eqref{eq:double_impl_agent_sequence} is equivalent to $\max_{\hat{\pi}_f \in \Pi_f : \hat{\pi}_{f a} = 1} \sum_{h \in Z(I)} \mathcal{U}_{f,h}(\pi_\ell,  \hat{\pi}_f) $, while the right-hand side is the same as $ \max_{\hat{\pi}_f \in \Pi_f} \sum_{h \in Z(I)} \mathcal{U}_{f,h}(\pi_\ell,  \hat{\pi}_f) $.
	Then, by dividing both sides of the equality in the first line of Equation~\eqref{eq:double_impl_agent_sequence} by $\sum_{h \in Z(I) }  \prod_{a \in \sigma_f(h)} \pi_{f a}$, by definition of $u_{f, I}(\pi_\ell,\pi_f \big/_{I} \hat{\pi}_f)$ we get the second line.
	%
	%
	Now, say that the condition of the lemma holds for every $a \in A(I)$.
	Clearly, we have $\max_{\hat{\pi}_f \in \Pi_f : \pi_f {=}_{I} \hat{\pi}_f  } u_{f,I} ( \pi_\ell,  \pi_f \big/_{I} \hat{\pi}_f ) = \sum_{a \in A(I)} \pi_{f a} \max_{\hat{\pi}_f \in \Pi_f : \hat{\pi}_{fa} = 1  } u_{f, I} ( \pi_\ell , \pi_f \big/_{I} \hat{\pi}_f  )  $, and, since $\pi_{fa} > 0$ only if $\max_{\substack{\hat{r}_f \in R_f(a)}} g_{f, I}(r_\ell, \hat{r}_f) = \max_{\substack{\hat{r}_f \in R_f(I)}} g_{f,I}(r_\ell,  \hat{r}_f)$, Eq.~\eqref{eq:double_impl_agent_sequence} proves the result.
	%
	%
\end{proof}

\lemfive*
\begin{proof}
	%
	%
	%
	First, notice that there must exist $\bar k \in \mathbb{N}$ such that, for all $k \in \mathbb{N} : k \geq \bar k$, and for every follower's information set $I \in \mathcal{I}_f$ and action $a \in A(I)$, if $\pi_{f a} > 0$, then ${r}_f(\epsilon_k,\sigma_f(I) a) > \xi_f(\epsilon_k,\sigma_f(I) a)$.
	Otherwise, by conditions (ii)-(iii) in Definition~\ref{def:qp_pert}, it would be $\pi_{fa} = 0$.
	Let us fix $I \in \mathcal{I}_f$ and $a \in A(I)$.
	Suppose that $\pi_{fa} > 0$.
	For all $k \in \mathbb{N} : k \geq \bar k $, we have that ${r}_f(\epsilon_k,\sigma_f(I) a) > \xi_f(\epsilon_k,\sigma_f(I) a)$, which, by Theorem~\ref{thm:compl_slack}, implies the following:
	$$
	\max_{\substack{\hat{r}_f \in R_f(a)}} g_{f, I}({r}_\ell(\epsilon_k) , \hat{r}_f) = \max_{\substack{\hat{r}_f \in R_f(I)}}  g_{f, I}({r}_\ell(\epsilon_k), \hat{r}_f).
	$$
	Thus, Lemma~\ref{lem:lemma_br_I} allows us to conclude that $\pi_f \in \mathsf{BR}_{I}(\pi_{\ell, k})$ for all $k \in \mathbb{N} : k \geq \bar k$, which proves the result.
	%
	%
	%
\end{proof}

\thmsefcepure*
\begin{proof}
  First, we check that the leader strategy is valid. The argument is
  identical to that of \citet{Cermak16:Using}. For the leader strategy at a
  given information set $I$ we pick an arbitrary $\sigma_f \in rel(\sigma_\ell(I))$
  that is played with positive probability and use the value
  $p(\sigma_\ell(I)a,\sigma_f)$ for all $a \in I$. All $\sigma_f \in
  rel(\sigma_\ell(I))$ recommend identical probability on $\sigma_\ell(I)a$ due to
  \eqref{eq:sefce_lp_leader_sequence_constraint} and the fact that we allow only
  a single follower action to be recommended at every follower information set.
  The incentive constraints \eqref{eq:sefce_lp_incentive_sequence} -
  \eqref{eq:sefce_lp_incentive_optimality} are identical to the original
  constraints given by \citet{Stengel08:Extensive}, so we only need to argue
  that we correctly represent the value of sending the residual along each
  sequence. But the value of sending the residual on $\sigma_f$ is simply the
  original value $\sum_{\sigma_\ell \in rel(\sigma_f)} p(\sigma_\ell,\sigma_f)
  u_f(\sigma_\ell, \sigma_f)$, except that we can send at most $\eta(\sigma_f)$
  probability on $\sigma_f$, plus the value of whichever choice we make for
  sending residual along descendants of $\sigma_f$. This is exactly the value
  that we encode in our constraints.
  It is easy to see that any SSE is a feasible solution to the LP: since the
  follower plays a pure strategy we can assign them their pure strategy, and
  assign the leader SSE strategy the same way across all follower
  recommendations.
\end{proof}

\subsection*{Limits of SSEs are QPSSEs}

Here, we show that limits of SSEs of perturbed SEFGs are QPSSEs of the original, unperturbed SEFGs, as the magnitude of the perturbation vanishes.

\begin{theorem}\label{thm:limit_sse}
	Given a perturbed SEFG $(\Gamma, \xi_\ell, \xi_f)$, let $\{ \epsilon_k \}_{k \in \mathbb{N}} \rightarrow 0$ and let $\{ (r_\ell(\epsilon_k), r_f(\epsilon_k)) \}_{k \in \mathbb{N}}$ be a sequence of SSEs in $\Gamma(\epsilon_k)$.
	Then, any limit point $(\pi_\ell, \pi_f)$ of the sequence $\{ (\pi_{\ell, k}, \pi_{f, k}) \}_{k \in \mathbb{N}}$ is a QPSSE of $\Gamma$,
	where $(\pi_{\ell, k}, \pi_{f, k})$ are equivalent to $(r_\ell(\epsilon_k), r_f(\epsilon_k))$ for all $k \in \mathbb{N}$.
\end{theorem}

\begin{proof}
		%
		First, as for Theorem~\ref{thm:limit_se}, Lemma~\ref{lem:limit_follower_response} allows us to conclude that point (i) in Definition~\ref{def:qpsse} holds.
		Let us prove point (ii).
		%
		%
		By contradiction, suppose that it does not hold,
		i.e., no matter how we choose sequences $\{ \pi_{i,k} \}_{k \in \mathbb{N}}$, for $i \in \mathcal{N}$ and $\pi_i \in \Pi_i$, there are ${I} \in \mathcal{I}_\ell \cup \{ I_\emptyset \}$, $\hat{\pi}_\ell \in \Pi_\ell$, and $ \hat{\pi}_f \in \Pi_f : \hat{\pi}_f \in \mathsf{BR}_{\hat I}(\pi_{\ell,k} \big/_{I} \hat{\pi}_{\ell,k}) $ for all $ \hat I \in \mathcal{I}_f$, with
		$
		u_{\ell} (\pi_{\ell,k} \big/_{I} \pi_\ell, \pi_{f,k} ) < u_{\ell} (\pi_{\ell,k} \big/_{I} \hat{\pi}_\ell, \hat{\pi}_{f,k} )
		$.
		%
		By continuity, there exists $\bar k \in \mathbb{N}$ such that, for all $k \in \mathbb{N}: k \geq \bar k$,
		$
		u_{\ell} (\pi_{\ell,k} \big/_{I} \pi_{\ell,k}, \pi_{f,k} ) = u_{\ell} (\pi_{\ell,k}, \pi_{f,k} )  < u_{\ell} (\pi_{\ell,k} \big/_{I} \hat{\pi}_{\ell,k}, \hat{\pi}_{f,k} ).
		$
		Let sequence $\{ \hat{\pi}_{\ell, k} \}_{k \in \mathbb{N}}$ be such that $\hat{r}_\ell(\epsilon_k) \in R_\ell(\epsilon_k)$ for all $k \in \mathbb{N}$, where each realization plan $\hat{r}_\ell(\epsilon_k)$ is equivalent to the strategy $\pi_{\ell,k} \big/_{I} \hat{\pi}_{\ell,k}$.
		Similarly, let sequence $\{ \hat{\pi}_{f, k} \}_{k \in \mathbb{N}}$ be such that $\hat{r}_f(\epsilon_k) \in R_f(\epsilon_k)$ for all $k \in \mathbb{N}$, where each $\hat{r}_f(\epsilon_k)$ is equivalent to $\hat{\pi}_{f,k}$.
		Notice that we can always choose the two sequences as described above, since we enforced point (iii) in Definition~\ref{def:qp_pert}.
		Clearly, $\hat{r}_f(\epsilon_k) \in \mathsf{BR}_{\Gamma(\epsilon_k)}(\hat{r}_\ell(\epsilon_k))$, otherwise $ \hat{\pi}_f \notin \mathsf{BR}_{\hat I}(\pi_{\ell,k} \big/_{I} \hat{\pi}_{\ell,k}) $ for some $ \hat I \in \mathcal{I}_f$, a contradiction.
		Using the equivalence between strategies and realization plans, for $k \in \mathbb{N} : k \geq \bar k$ we have that $u_\ell(r_\ell(\epsilon_k), r_f(\epsilon_k)) < u_\ell(\hat{r}_\ell(\epsilon_k),\hat{r}_f(\epsilon_k))$, which contradicts the fact that $(r_\ell(\epsilon_k), r_f(\epsilon_k))$ is an SSE of $\Gamma(\epsilon_k)$.
\end{proof}

\end{document}